\documentclass[12pt,a4paper]{article}
\usepackage{standalone} 
\usepackage[a4paper,margin=1in]{geometry}
\usepackage{amsmath,amsfonts,amssymb}
\usepackage{amsthm}
\graphicspath{{fig/}}     
\usepackage{graphicx}
\usepackage{hyperref}
\usepackage{xcolor}
\usepackage{orcidlink}
\usepackage{comment}
\usepackage{float}
\usepackage{amsfonts} 
\newcommand{\R}{\mathbb{R}}
\usepackage[font=small,labelfont=bf]{caption}
\usepackage{natbib}

\newtheorem{theorem}{Theorem}
\newtheorem{lemma}[theorem]{Lemma}
\newtheorem{proposition}[theorem]{Proposition}
\usepackage{hyperref}
\usepackage{todonotes}
\usepackage{geometry}
\usepackage{doi}
\usepackage{natbib}
\usepackage{authblk}

\title{Deterministic and stochastic infection dynamics in a population subject to stress}
\author[1]{Clotilde Djuikem\orcidlink{0000-0003-4815-743X
}\thanks{Corresponding author: \href{mailto:Clotilde.Djuikem@umanitoba.ca}{Clotilde.Djuikem@umanitoba.ca}}}
\author[1]{Julien Arino\orcidlink{0000-0001-6409-5027}}
\affil[1]{ Department of Mathematics, University of Manitoba, Winnipeg, Manitoba, Canada}
\date{}

\begin{document}
\maketitle


\begin{abstract}
Physiological stress fundamentally alters disease susceptibility in aquatic environments. In this paper, we develop a stress-structured epidemiological model where host vulnerability is dynamically driven by water quality.
Analytically, we establish that the system exhibits a classic forward bifurcation at $\mathcal{R}_0=1$, confirming that the basic reproduction number remains a valid threshold for eradication.
However, stochastic analysis reveals a critical asymmetry not captured by deterministic thresholds. We show that while $\mathcal{R}_0$ predicts stability, the probability of an outbreak depends on the initial physiological state. Introducing infection into a stressed sub-population leads to immediate rapid growth of the disease, whereas introduction into the normal class faces a stochastic barrier that significantly delays the epidemic peak.
\end{abstract}

\section{Introduction}

Infectious diseases in aquatic systems result in substantial economic losses in aquaculture worldwide and impact biodiversity in the wild \citep{subasinghe2009global}. 
In the current context of climate change and intensifying production, water quality fluctuations are becoming more frequent \cite{reid2019climate}. 
Hence, interactions between environmental drivers, host physiology and pathogens need to be better understood to control outbreaks.

This work is motivated by the physiological link between environmental stress and host susceptibility.
Exposure to stressors, particularly hypoxia (low dissolved oxygen), induces a cortisol response in fish. 
This endocrine response suppresses the mucosal immunity of the skin and gills, which serves as the primary defense against waterborne pathogens \citep{snieszko1974stress,tort2011stress}. Consequently, a decline in water quality can rapidly transition a population from a resistant state to a highly susceptible one.

Mathematical models of aquatic diseases have been widely studied. For example, the spread of sea lice in salmon farms was modelled in~\cite{krkosek2005transmission}, showing that management strategies, such as mandatory production breaks, are essential for control. 
Theoretical models were built in~\cite{lafferty2003how} to test whether environmental stress increases or decreases disease prevalence, finding that the effect depends strongly on host density. 
It was emphasized in~\cite{harvell2002climate} that climate warming and unusual environmental events can activate new diseases in marine organisms, pointing to the role of abiotic stressors. 
Finally, how seasonal changes in the environment affect epidemic cycles was studied in~\cite{altizer2006seasonality}, showing that transmission rates vary over time with external drivers.

However, few models focus on the \emph{internal} physiological state of the host. 
Classical compartmental models, such as the Susceptible--Infected--Recovered (SIR) framework, usually treat host susceptibility as fixed \cite{kermack1927contribution}. 
Even heterogeneous models, such as those in~\cite{dwyer1997host}, which separate robust and frail individuals, assume these traits do not change; individuals remain in the same risk class for life.

An important question in modelling heterogeneous populations is whether such structure changes the disease-free state. Many studies highlight the risk of \emph{backward bifurcation}, where disease can persist even when the basic reproduction number $\mathcal{R}_0$ is below one \cite{hadeler1997backward}. However, the link between this theory and host physiology still needs attention to develop best control strategies. 
It was shown experimentally in~\cite{gervasi2015host} that stress hormones can change host competence, but most models include this only as a parameter change, not as a shift in population structure. 
As a result, current models rarely capture the \textit{dynamic transition} between health states. 
In particular, it remains unclear how the movement of hosts from a ``normal'' to a ``stressed'' state due to water quality affects infection severity and the timing of disease introduction. 

To bridge this gap, we propose a mathematical framework that explicitly couples stress dynamics with disease transmission.
We develop and analyse both the deterministic and stochastic dynamics of the system. 

The paper is organised as follows. 
In Section~\ref{sec:deterministic}, we introduce the deterministic stress–infection model, establish basic qualitative properties and derive the basic reproduction number and the stability of the equilibria. 
The second part of that section extends the analysis to time-dependent stress, introducing a time-varying reproduction number and bounds based on constant-stress regimes. 
In Section~\ref{sec:stochastic}, we formulate the stochastic continuous-Time Markov chain version of the model, analyse the associated multitype branching process, and explore extinction probabilities, outbreak sizes and first-introduction times under different water–stress scenarios. 
We conclude in Section~\ref{sec:discussion} with a discussion of the implications for stress management and disease control in fish populations and other host–pathogen systems.

\section{Deterministic model and analysis}\label{sec:deterministic} 

\subsection{Model formulation}

We consider a structured fish population composed of normal and stressed individuals, in which stress modifies both susceptibility and recovery. 
Let $S_N$ and $S_S$ denote, respectively, the densities of normal and stressed susceptible fish, $I_N$ and $I_S$ the corresponding infected classes and $R$ the recovered class.
The total population at time $t$ is
\[
N(t) = S_N(t) + S_S(t) + I_N(t) + I_S(t) + R(t).
\]
Births occur at a constant rate $\Lambda$; all individuals die naturally at \emph{per capita} rate $\mu$. The schematic representation of the deterministic system is given in Figure~\ref{fig:model-flowchart}, highlighting the stress transition ($\alpha(t)$), infection processes ($\beta_N, \beta_S$), and recovery ($\gamma_N, \gamma_S$) and mortality due to the disease ($d_N, d_S$). Since our primary interest is to understand the initial introduction of the pathogen within susceptible classes, we assume that there are no transitions between infected classes due to the fact that the impact of environmental stress on already infected individuals is captured within the parameter differences between the two classes. We also assume that stress acts irreversibly on the epidemic time scale considered here. While susceptible individuals may enter the stressed class in response to deteriorating environmental conditions, recovery from stress would require sustained environmental improvement and physiological recovery, which occur on longer time scales and are therefore neglected.

\begin{figure}[H]
    \centering
    \includegraphics{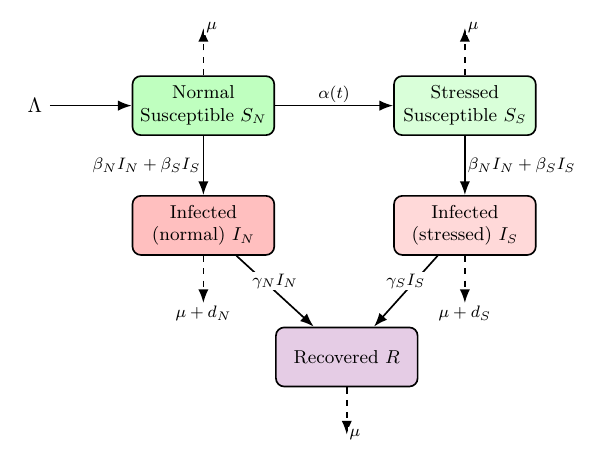} 
    \caption{Flow diagram of the deterministic stress–infection model:
    Normal susceptible ($S_N$), 
    Stressed susceptible ($S_S$), 
    Normal infected ($I_N$), 
    Stressed infected ($I_S$)
    and Recovered ($R$).}
    \label{fig:model-flowchart}
\end{figure}
The infection process is described by a mass-action force of infection
\[
\lambda(t) = \beta_N I_N(t) + \beta_S I_S(t),
\]
where $\beta_N$ and $\beta_S$ are transmission parameters for normal and stressed susceptible classes, respectively.
Disease-induced mortality occurs at \emph{per capita} rates $d_N$ and $d_S$ for normal and stressed infected compartments, respectively; recovery from infection in these compartments occurs at the \emph{per capita} rates $\gamma_N$ and $\gamma_S$, respectively.
The resulting stress--infection system is given  in~\eqref{eq:model-ODE-alpha-t}, with parameter  names and values in Table~\ref{tab:parameters}:
\begin{equation}\label{eq:model-ODE-alpha-t}
\begin{cases}
\displaystyle \frac{dS_N}{dt} = \Lambda - \alpha(t) S_N - \lambda(t) S_N - \mu S_N,\\[6pt]
\displaystyle \frac{dS_S}{dt} = \alpha(t) S_N - \lambda(t) S_S - \mu S_S,\\[6pt]
\displaystyle \frac{dI_N}{dt} = \lambda(t) S_N - (\gamma_N+\mu+d_N) I_N,\\[6pt]
\displaystyle \frac{dI_S}{dt} = \lambda(t) S_S - (\gamma_S+\mu+d_S) I_S,\\[6pt]
\displaystyle \frac{dR}{dt}   = \gamma_N I_N + \gamma_S I_S - \mu R,
\end{cases}
\end{equation}
with non-negative initial conditions $(S_N(0),S_S(0),I_N(0),I_S(0),R(0))\in\mathbb{R}_+^5$.

\subsubsection{Water conditions and stress}

To represent environmental conditions, we introduce a water-quality variable $W(t)$, interpreted here as dissolved oxygen (DO) concentration in the water column, measured in mg\,L$^{-1}$.
Fish populations typically exhibit non-linear physiological responses to hypoxia: they maintain homeostasis until oxygen levels drop below a specific threshold, after which stress responses escalate rapidly.
To capture this biological threshold effect, we define a dimensionless water-stress index $S_W(t) \in [0,1]$ using a sigmoidal function adapted from the general stress formulation in \cite{pfister2009assessing}:

\begin{equation}\label{eq:water-stress-index}
S_W(t) = \frac{1}{1 + \exp\big(-k_W\,(W_{\mathrm{crit}} - W(t))\big)}.
\end{equation}

Here, $W_{\mathrm{crit}}$ represents the critical dissolved oxygen level marking the onset of hypoxia-driven stress.
The parameter $k_W > 0$ determines the steepness of this transition; a high value of $k_W$ implies that a small decrease in oxygen near $W_{\mathrm{crit}}$ triggers a sharp shift from the normal state ($S_W \approx 0$) to the stressed state ($S_W \approx 1$).

We link this environmental driver to the epidemic model~\eqref{eq:model-ODE-alpha-t} by defining the transition rate from the normal to the stressed susceptible compartment as
\begin{equation}\label{eq:alpha-time-dependent}
\alpha(t) = \alpha_{\max}\,S_W(t),
\end{equation}
where $\alpha_{\max} > 0$ is the maximal stress induction rate.
In this formulation, water quality acts as an external function that modulates the flow of susceptible hosts into the high-risk class $S_S$.


To illustrate how this non-linear threshold transforms environmental data into physiological risk, we consider four distinct water-quality regimes.
Figure~\ref{fig:W-alpha-scenarios} shows four curves describing these four scenarios, ranging from well-oxygenated to persistently low-oxygen conditions, and the resulting stress rate. 
By comparing the dissolved oxygen profiles  (left panel) with the stress rates  (right panel), we observe that stress remains negligible in well-oxygenated conditions but rises sharply once dissolved oxygen falls near or below $W_{\text{crit}}$.

\begin{figure}[ht]
    \centering
    \includegraphics[width=\textwidth]{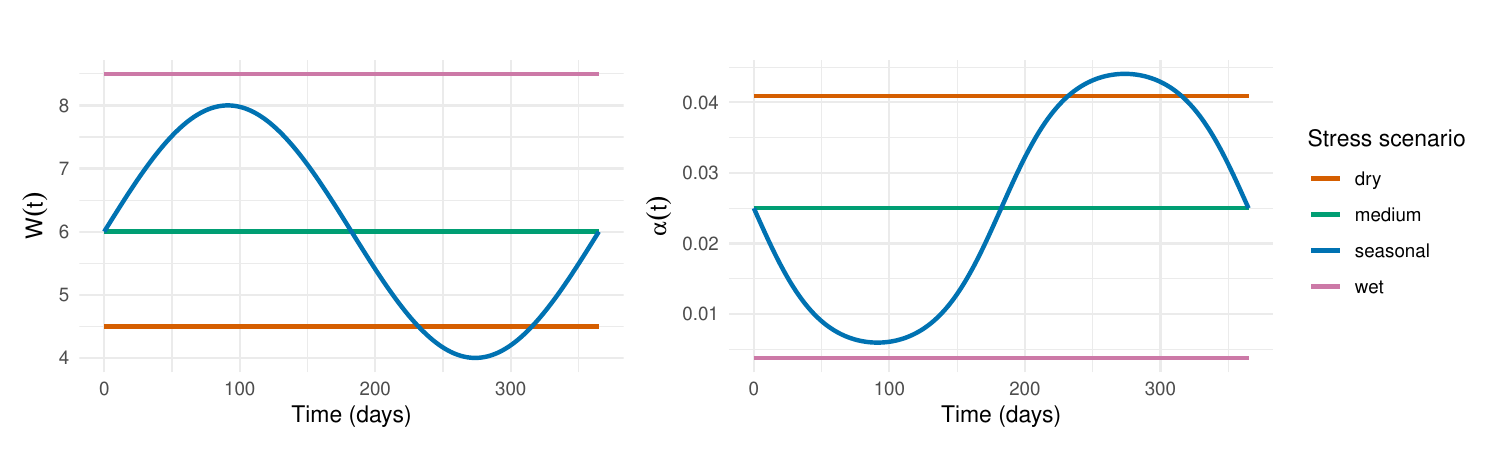}
    \caption{
    Dissolved oxygen profiles $W(t)$ (left panel) and corresponding stress rates $\alpha(t)$ (right panel) under four illustrative water-quality scenarios: a well-oxygenated ``wet'' regime (8.5 mg\,L$^{-1}$), a borderline ``medium'' regime at the critical threshold ($W_{\mathrm{crit}}=6$ mg\,L$^{-1}$), a low-oxygen ``dry'' regime (4.5 mg\,L$^{-1}$) \cite{pfister2009assessing} and a ``seasonal'' regime in which $W(t)$ oscillates between approximately 4 and 8 mg\,L$^{-1}$ over one year. The remain parameter values are given in Table~\ref{tab:parameters}. }
    \label{fig:W-alpha-scenarios}
\end{figure}

Experimental studies compiled in \citep{RandallMacKinlay2002} show that
fish responses to hypoxia span multiple time scales, from rapid physiological adjustments
(hours) to longer-lasting molecular and hematological changes (days to weeks). We therefore
interpret $\alpha_{\max}$ as the characteristic timescale of stress induction under sustained
hypoxia. Accordingly,
we set $\alpha_{\max}=0.05~\mathrm{day}^{-1}$, which corresponds to a characteristic induction
time of approximately $1/\alpha_{\max}\approx 20$ days.

\begin{table}[h!]
\centering
\caption{Description of model parameters.}
\begin{tabular}{llll}
\hline
\textbf{Symbol} & \textbf{Description} & \textbf{Value/Unit} & \textbf{Source}\\
\hline
$\Lambda$ & Recruitment or birth rate of fish & 10 ind$\cdot$day$^{-1}$ & Assumed\\
$\alpha_{\max}$ & Maximal stress rate  & 0.05 day$^{-1}$ & \citep{RandallMacKinlay2002}\\
$W_{\mathrm{crit}}$ & Critical dissolved oxygen level & 6 mg$\cdot$L$^{-1}$ & \citep{pfister2009assessing, solstorm2016dissolved} \\
$k_W$ & Sensitivity of stress to oxygen deficit & 1 (mg$\cdot$L$^{-1}$)$^{-1}$ & Assumed\\
$\beta_N$ & Transmission rate for normal  fish & ind$^{-1}$.day$^{-1}$ & $f(\mathcal R_{01})$\\
$\beta_S$ & Transmission rate for stressed fish & ind$^{-1}$.day$^{-1}$ & $2\beta_N$\\
$\gamma_N$ & Recovery rate for normally infected fish & 0.1 day$^{-1}$ & \citep{ogut2005dynamics}\\
$\gamma_S$ & Recovery rate for stressed infected fish & 0.1 day$^{-1}$ & \citep{ogut2005dynamics}\\
$\mu$ & Natural mortality rate & $10^{-4}-2.10^{-3}$ day$^{-1}$ & \citep{fjelldal2019effects} \\
$d_N$ & Disease-induced mortality for $I_N$ fish & $4.10^{-3}-3.10^{-2}$ day$^{-1}$ & \cite{fjelldal2019effects} \\
$d_S$ & Disease-induced mortality for $I_S$ fish &$4.10^{-3}-3.10^{-2}$ day$^{-1}$ & \cite{fjelldal2019effects}\\
\hline
\end{tabular}
\label{tab:parameters}
\end{table}

\subsubsection{Basic properties of the ODE model}

\begin{lemma}
Assume that $\alpha(t)$ is bounded and piecewise
continuous on $[0,\infty)$, and that all parameters are nonnegative.
Then, for any initial condition
\[
\mathbf X(0)=(S_N(0),S_S(0),I_N(0),I_S(0),R(0))\in\R_+^5,
\]
system~\eqref{eq:model-ODE-alpha-t} admits a unique solution defined for
all $t\ge 0$, and this solution remains nonnegative:
\[
\mathbf X(t)=(S_N(t),S_S(t),I_N(t),I_S(t),R(t))\ge 0,\qquad t\ge 0.
\]
\end{lemma}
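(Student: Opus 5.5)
The plan has four steps: local existence and uniqueness, invariance of the nonnegative orthant, an a priori bound on the total population, and continuation to all $t\ge 0$.

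\textbf{Local existence and uniqueness.} Write system~\eqref{eq:model-ODE-alpha-t} as $\mathbf X'=F(t,\mathbf X)$. The right-hand side is polynomial, hence locally Lipschitz, in $\mathbf X$. Its only time dependence is through $\alpha(t)$, which is bounded and piecewise continuous.
\begin{itemize}
\item Carathéodory's theorem then gives a unique absolutely continuous solution on a maximal interval $[0,T_{\max})$.
\item Alternatively, apply Picard--Lindelöf on each interval of continuity of $\alpha$. Glue the pieces at the finitely many discontinuities in any bounded interval, using the terminal value of one piece as the initial value of the next.
\end{itemize}
Continuity and uniqueness are preserved across the switching times.

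\textbf{Nonnegativity.} I will not rely only on the quasi-positivity argument, because $I_N$ and $I_S$ are coupled through $\lambda$. Instead I will use explicit variation-of-constants formulas.
\begin{itemize}
\item For the susceptibles, as long as the solution exists,
\[
S_N(t)=S_N(0)e^{-\int_0^t(\alpha+\lambda+\mu)\,ds}+\Lambda\int_0^t e^{-\int_s^t(\alpha+\lambda+\mu)\,du}\,ds\ge 0 .
\]
Here $\lambda$ is simply a continuous function along the solution, whatever its sign. In the same way,
\[
S_S(t)=S_S(0)e^{-\int_0^t(\lambda+\mu)}+\int_0^t \alpha(s)S_N(s)e^{-\int_s^t(\lambda+\mu)}\,ds\ge 0 .
\]
\item The infected pair $(I_N,I_S)$ solves a linear nonautonomous system $I'=A(t)I$. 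The off-diagonal entries of $A(t)$ are $\beta_S S_N$ and $\beta_N S_S$, which are nonnegative by the previous step. So $A(t)$ is Metzler, and the standard result that Metzler linear systems leave $\R_+^2$ invariant gives $I_N,I_S\ge 0$.
\item To prove that result directly, consider $e^{ct}I$ with $c$ large. It satisfies a system whose matrix is entrywise nonnegative on compact time intervals, and the Picard iterates starting from $I(0)\ge0$ stay nonnegative.
\item Finally,
\[
R(t)=R(0)e^{-\mu t}+\int_0^t e^{-\mu(t-s)}(\gamma_N I_N+\gamma_S I_S)\,ds\ge0 .
\]
\end{itemize}

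\textbf{Boundedness and global existence.} Summing the equations gives
\[
N'=\Lambda-\mu N-d_NI_N-d_SI_S\le \Lambda-\mu N .
\]
Hence $N(t)\le N(0)e^{-\mu t}+\Lambda t$ for any $\mu\ge 0$; this is the $\mu=0$ bound, and if $\mu>0$ one also has $N\le\max(N(0),\Lambda/\mu)$. Because every component is nonnegative, each is bounded by $N(t)$ on every bounded interval, so the solution cannot blow up in finite time. The standard continuation theorem then gives $T_{\max}=\infty$.

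\textbf{Main obstacle.} The delicate point is that $\alpha$ is only piecewise continuous. One must either work with Carathéodory solutions or carefully glue classical solutions across jumps, and check that the nonnegativity formulas, which involve integrals of $\alpha$, remain valid. Since they are integral identities, they hold for absolutely continuous solutions. The coupling in the infected block is resolved by the Metzler argument above.
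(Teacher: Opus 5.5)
Your proposal is correct, and it reaches the result by a somewhat different route from the paper's. The paper applies Picard--Lindel\"of directly, asserting continuity in $t$, which glosses over the jumps of $\alpha$. It then obtains only a maximal solution. Nonnegativity is shown by the boundary (quasi-positivity) criterion, citing Farina--Rinaldi: whenever one component vanishes while the others are nonnegative, its derivative is nonnegative ($\dot S_N=\Lambda$, $\dot S_S=\alpha S_N$, $\dot I_N=\beta_S I_S S_N$, $\dot I_S=\beta_N I_N S_S$, $\dot R=\gamma_N I_N+\gamma_S I_S$). Global existence is never argued inside the lemma. It is only implicit in the next lemma's bound $N\le\max\{N(0),\Lambda/\mu\}$, which needs $\mu>0$.

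You instead decouple the system. Treating $\lambda$ along the solution as a known function gives explicit integral formulas for $S_N$ and $S_S$. Treating $S_N,S_S$ as known then makes $(I_N,I_S)$ a linear Metzler system, and $R$ follows by an integrating factor. You also handle the discontinuities of $\alpha$ via Carath\'eodory or gluing. Finally, you close the continuation argument with a bound on $N$ that is valid even for $\mu=0$, which the hypothesis ``all parameters nonnegative'' allows.

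The paper's criterion is shorter. As a theorem for locally Lipschitz quasi-positive vector fields, it handles the $I_N$--$I_S$ coupling without difficulty, so your concern there is not a genuine obstruction. However, the paper's one-line ``quick analysis'' is only a sketch of that theorem. Your argument is self-contained and rigorous for piecewise continuous $\alpha$, and it actually proves the ``defined for all $t\ge 0$'' part of the statement.
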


\begin{proof}
The right-hand side of~\eqref{eq:model-ODE-alpha-t} is continuous in $t$
and locally Lipschitz in $\mathbf X(t)$, being a polynomial in
the state variables with bounded time-dependent coefficient
$\alpha(t)$. Hence existence and uniqueness of a maximal
solution follow from the Picard--Lindel\"of theorem
\cite{Perko2001}.
To prove positivity, suppose that one of the variables, that we will denote $X_i$
is equal to 0 at some
time, with all others being non-negative. A quick analysis shows that $\dot X_i\geq 0$ so that $X_i$ cannot become negative \cite{FarinaRinaldi2000}.

\end{proof}

\begin{lemma}
The domain
\[
\Omega = \left\{\mathbf X(t)\in\R_+^5:
N\le \dfrac{\Lambda}{\mu}\right\}
\]
is positively invariant under the flow of \eqref{eq:model-ODE-alpha-t} and attracts all trajectories starting in
$\R_+^5$.
\end{lemma}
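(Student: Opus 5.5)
The plan is to work with the total population $N=S_N+S_S+I_N+I_S+R$ and derive a scalar differential inequality for it. Summing the five equations of~\eqref{eq:model-ODE-alpha-t}, all terms involving $\alpha(t)$, the force of infection $\lambda(t)$ and the recovery rates $\gamma_N,\gamma_S$ cancel in pairs, leaving
\[
\frac{dN}{dt}=\Lambda-\mu N-d_N I_N-d_S I_S .
\]
By the previous lemma, $I_N,I_S\ge 0$ along solutions, and $d_N,d_S\ge 0$. We therefore obtain
\[
\frac{dN}{dt}\le \Lambda-\mu N .
\]

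Positive invariance of $\Omega$ then follows from a comparison argument. Multiplying by $e^{\mu t}$ and integrating gives
\[
N(t)\le \frac{\Lambda}{\mu}+\Big(N(0)-\frac{\Lambda}{\mu}\Big)e^{-\mu t}.
\]
If $N(0)\le \Lambda/\mu$, then $N(t)\le\Lambda/\mu$ for all $t\ge 0$. Together with nonnegativity of each component from the previous lemma, this shows $\mathbf X(t)\in\Omega$. The same bound also confirms that solutions stay bounded, which is consistent with the global existence already established.

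For attraction, take an arbitrary initial condition in $\R_+^5$. The same estimate yields $\limsup_{t\to\infty}N(t)\le\Lambda/\mu$, so every trajectory approaches $\Omega$. Strictly speaking, a trajectory starting with $N(0)>\Lambda/\mu$ need not enter $\Omega$ in finite time; it is only guaranteed to enter every neighbourhood $\{N\le \Lambda/\mu+\varepsilon\}$ in finite time. I would therefore state attraction in this sense: the distance from $\mathbf X(t)$ to $\Omega$ tends to $0$ as $t\to\infty$.

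The main obstacle is mild. It consists of stating attraction precisely in the sense above, and of ensuring $\mu>0$: the lemma's hypothesis allows nonnegative parameters, but $\Lambda/\mu$ must be finite. I would make $\mu>0$ explicit as an assumption, which is consistent with Table~\ref{tab:parameters}.
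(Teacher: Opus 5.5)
Your proposal is correct and follows the paper's proof. Both sum the equations to get $\dot N=\Lambda-\mu N-d_NI_N-d_SI_S\le\Lambda-\mu N$ and then use a comparison argument, which gives invariance when $N(0)\le\Lambda/\mu$ and $\limsup_{t\to\infty}N(t)\le\Lambda/\mu$ in general. Your remarks that attraction only means $\mathrm{dist}(\mathbf X(t),\Omega)\to 0$ (a disease-free trajectory with $N(0)>\Lambda/\mu$ never enters $\Omega$ in finite time) and that $\mu>0$ must be assumed are accurate sharpenings of the paper's looser ``ultimately bounded by $\Lambda/\mu$''.
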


\begin{proof}
Summing the equations of~\eqref{eq:model-ODE-alpha-t} yields
\[
\frac{dN}{dt}
= \Lambda - \mu N - d_N I_N - d_S I_S
\le \Lambda - \mu N.
\]
Solving and using  the comparison principle,  for all $t\ge0$, one has
\[
N(t)\le \max\!\left\{N(0),\frac{\Lambda}{\mu}\right\}.
\]
If $N(0)\le\Lambda/\mu$, then $N(t)\le\Lambda/\mu$ for all $t\ge0$,
showing that $\Omega$ is positively invariant. Moreover, for arbitrary
$N(0)\ge0$, $\overline N(t)\to\Lambda/\mu$ as $t\to\infty$, so $N(t)$
is ultimately bounded by $\Lambda/\mu$. This construction is standard
in SIR-type models with vital dynamics
\cite{BrauerCastilloChavez2001,hethcote2000}.
\end{proof}

  \subsection{Constant stress rate}

To obtain explicit analytical results for equilibria and thresholds, we first focus on a baseline case in which water conditions are approximately stationary on the epidemic time scale, so that
\[
\alpha(t) \equiv \alpha \in (0,\alpha_{\max}]
\]
can be treated as a constant.
Under this assumption, \eqref{eq:model-ODE-alpha-t} reduces to
\begin{equation}\label{sys:model-ODE-constant-stress}
\begin{cases}
\displaystyle \frac{dS_N}{dt} = \Lambda - \alpha S_N - \lambda(t) S_N - \mu S_N,\\[6pt]
\displaystyle \frac{dS_S}{dt} = \alpha S_N - \lambda(t) S_S - \mu S_S,\\[6pt]
\displaystyle \frac{dI_N}{dt} = \lambda(t) S_N - (\gamma_N+\mu+d_N) I_N,\\[6pt]
\displaystyle \frac{dI_S}{dt} = \lambda(t) S_S - (\gamma_S+\mu+d_S) I_S,\\[6pt]
\displaystyle \frac{dR}{dt}   = \gamma_N I_N + \gamma_S I_S - \mu R,
\end{cases}
\end{equation}
with $\lambda(t) = \beta_N I_N(t) + \beta_S I_S(t)$.
 Models with multiple susceptibility or risk classes have been widely studied in the epidemiological literature. 
In particular, two-group compartmental models have been used to account for heterogeneity in host behavior or physiology, often leading to nontrivial threshold dynamics \cite{hadeler1997backward,MagalSeydiWebb2016}. 
Related formulations also arise in stage-structured or risk-structured epidemic models, where transitions between host classes play a central role in shaping disease persistence and extinction properties \cite{DjuikemGrognardTouzeau2024}. 
Two-patch or source–sink models with coupling between subpopulations further illustrate how transfers between classes or locations can fundamentally alter equilibrium structure and stability \cite{Li2022MBE}.

Model~\eqref{sys:model-ODE-constant-stress} shares this general two-class structure, but with transition only in the susceptible class and one way, but differs in the interpretation and role of the coupling mechanism.
Here, the transition from the non-stressed to the stressed susceptible class is driven by an environmentally induced stress rate $\alpha$, rather than by spatial migration, behavioural change, or ontogenic progression, especially in the general model~\eqref{eq:model-ODE-alpha-t} with non-constant $\alpha(t)$.

\subsubsection{The basic reproduction number and system behaviour}
\label{sec:constant-stress-R0}
At the disease-free equilibrium (DFE), $I_N = I_S = 0$ and infection cannot invade unless new infections on average exceed recovery and removal processes. 
Then, \eqref{sys:model-ODE-constant-stress} admits a unique DFE given by
\begin{equation}
\label{eq:DFE-constant-stress}
E^0 = \bigg( S_N^0,\, S_S^0,\, 0,\, 0,\, 0 \bigg),
\quad
S_N^0 = \frac{\Lambda}{\alpha + \mu},
\quad
S_S^0 = \frac{\alpha \Lambda}{\mu(\alpha + \mu)}.
\end{equation}

Using the next-generation matrix method \cite{VdDWatmough2002} and denoting $\nu_j=\gamma_j+\mu+d_j$, $j\in\{N,S\}$, the basic reproduction number $\mathcal{R}_0$ is obtained as:
\begin{equation}\label{eq:R0-determ}
\mathcal{R}_0 = 
\underbrace{\frac{\beta_N S_N^0}{\nu_N}}_{\mathcal{R}_{01}}
\;+\;
\underbrace{\frac{\beta_S S_S^0}{\nu_S}}_{\mathcal{R}_{02}}.
\end{equation}

The decomposition in~\eqref{eq:R0-determ} highlights two distinct transmission pathways.
The first term, $\mathcal{R}_{01}$, represents secondary infections generated by the ``normal'' phenotype.
The second term, $\mathcal{R}_{02}$, captures the contribution of the ``stressed'' phenotype.
Since stress is expected to increase shedding ($\beta_S > \beta_N$) and potentially delay recovery ($\nu_S < \nu_N$), a shift in population structure from $S_N$ to $S_S$ (increasing $S_S^0$) effectively shifts the weight of transmission toward the more efficient $\mathcal{R}_{02}$ pathway, thereby increasing the overall epidemic risk.

System~\eqref{sys:model-ODE-constant-stress} also admits a unique endemic equilibrium 
\begin{equation}\label{eq:EEP-constant-stress}
E^* = (S_N^*, S_S^*, I_N^*, I_S^*, R^*),
\end{equation}
where 
\[
\begin{aligned}
S_N^* &= \frac{\Lambda}{\alpha + \mu + \lambda^*}, &
S_S^* &= \frac{\alpha S_N^*}{\mu + \lambda^*}, \\[4pt]
I_N^* &= \frac{\lambda^* S_N^*}{\nu_N}, &
I_S^* &= \frac{\lambda^* S_S^*}{\nu_S}, \; R^*=\frac{\gamma_N I_N^*+\gamma_SI_S^*}{\mu},
\end{aligned}
\]
are expressed in terms of the total force of infection
\[
\lambda^* = \beta_N I_N^* + \beta_S I_S^*.
\]

The behaviour of \eqref{sys:model-ODE-constant-stress} is then governed by the following result, which is proved in several lemmas in Sections~\ref{sec:DFE-constant-stress} and \ref{sec:EEP-constant-stress}.
\begin{proposition}
\label{prop:behaviour-constant-stress}
In the region $\Omega$, when  $\mathcal R_0<1$, \eqref{sys:model-ODE-constant-stress} undergoes a forward transcritical bifurcation at $\mathcal R_0=1$:
\begin{itemize}
    \item when $\mathcal R_0<1$, the disease-free equilibrium $E^0$ is globally asymptotically stable and the endemic equilibrium is not biologically relevant;
    \item for $\mathcal R_0>1$, the disease-free equilibrium $E^0$ is unstable and the unique endemic equilibrium $E^*$ is locally asymptotically stable in $\Omega$.
\end{itemize}
\end{proposition}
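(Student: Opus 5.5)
The plan is to treat the two regimes separately, working in the reduced variables $(S_N,S_S,I_N,I_S)$. The equation for $R$ decouples, and $R(t)\to R^*$ as soon as $I_N,I_S$ converge.

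\textbf{Case $\mathcal R_0<1$: global stability of $E^0$.} I would first use the invariance and attractivity of $\Omega$ (previous lemma) together with comparison arguments. Since $\dot S_N\le \Lambda-(\alpha+\mu)S_N$, we get $\limsup S_N\le S_N^0$. Feeding this into $\dot S_S\le \alpha S_N-\mu S_S$ gives $\limsup S_S\le S_S^0$.

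Next I would build a linear Lyapunov function on the infected compartments,
\[
L=\frac{\beta_N}{\nu_N}I_N+\frac{\beta_S}{\nu_S}I_S .
\]
Differentiating gives
\[
\dot L=\lambda\Big(\frac{\beta_N S_N}{\nu_N}+\frac{\beta_S S_S}{\nu_S}-1\Big).
\]
Asymptotically the bracket is at most $\mathcal R_0-1+\varepsilon<0$, so for large $t$ we have $\dot L\le -c\lambda\le -c' L$, and hence $I_N,I_S\to0$. The DFE is the only equilibrium of this system, so once $\lambda\to0$ the asymptotically autonomous limit system
\[
\dot S_N=\Lambda-(\alpha+\mu)S_N,\qquad \dot S_S=\alpha S_N-\mu S_S
\]
yields $S\to(S_N^0,S_S^0)$. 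Combining this with the local stability of $E^0$ (below) gives global asymptotic stability.

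For the claim that $E^*$ is not biologically relevant, I would reduce the endemic equations to a scalar equation in $\lambda^*$. Dividing $\lambda^*=\beta_N I_N^*+\beta_S I_S^*$ by $\lambda^*\neq0$ gives
\[
1=G(\lambda):=\frac{\beta_N\Lambda}{\nu_N(\alpha+\mu+\lambda)}+\frac{\beta_S\alpha\Lambda}{\nu_S(\alpha+\mu+\lambda)(\mu+\lambda)} .
\]
The function $G$ is strictly decreasing on $[0,\infty)$, with $G(0)=\mathcal R_0$ and $G(\infty)=0$. Hence a positive root exists if and only if $\mathcal R_0>1$, and it is then unique. This settles existence and uniqueness of $E^*$ and shows there is no positive equilibrium when $\mathcal R_0<1$. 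It is also exactly the forward-bifurcation picture: as $\mathcal R_0\downarrow 1$ we get $\lambda^*\downarrow0$, so the endemic branch emanates from $E^0$ into the positive orthant. If desired, one could confirm the direction with the Castillo-Chavez–Song center manifold coefficients, showing $a<0$ because all the nonlinear terms are $-\lambda S$ with positive eigenvector weights.

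\textbf{Case $\mathcal R_0>1$.} Instability of $E^0$ follows from the next-generation theorem of van den Driessche–Watmough already invoked for $\mathcal R_0$. For local stability of $E^*$, I would linearize at $E^*$. The Jacobian has a block structure in which $S_N$ does not depend on $S_S$ and the infections feed back only through $\lambda$. Using $\lambda$ as a variable, I expect the characteristic equation to take the form
\[
1+\lambda^*\Big(\frac{\ldots}{\ldots}\Big)=G\text{-type expression in }s .
\]
Concretely, I would write it as $H(s)=1$ with
\[
H(s)=\frac{\beta_N S_N^*}{s+\nu_N}+\frac{\beta_S S_S^*}{s+\nu_S}
\]
minus positive correction terms coming from the susceptible feedback. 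I would then argue by contradiction: if $\operatorname{Re}s\ge0$, then $|H(s)|<H(0)$ when the corrections are accounted for, while the equilibrium identity gives $\beta_N S_N^*/\nu_N+\beta_S S_S^*/\nu_S=1$. This is the main obstacle. The feedback terms $\lambda^*$ in the susceptible equations and the $\alpha$-coupling make a clean modulus argument delicate. As a fallback, I would compute the characteristic polynomial of the $4\times4$ matrix explicitly and verify the Routh–Hurwitz conditions, using $G'(\lambda^*)<0$ (the transversality coming from monotonicity of $G$) to sign the constant term. Alternatively, I would try a Volterra–Goh Lyapunov function
\[
V=\sum_i c_i\,(x_i-x_i^*-x_i^*\ln x_i/x_i^*),
\]
which would even give global stability.
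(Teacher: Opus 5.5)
Your treatment of $\mathcal R_0<1$ and of existence and uniqueness is sound, and in places tighter than the paper's. The paper uses the same linear function $\frac{\beta_N}{\nu_N}I_N+\frac{\beta_S}{\nu_S}I_S$. However, it applies $S_N\le S_N^0$ and $S_S\le S_S^0$ as if they held pointwise in $\Omega$, which they do not, since $\Omega$ only bounds $N$. It then invokes LaSalle. Your $\limsup$ comparison followed by $\dot L\le -c'L$ for large $t$ is the correct way to use those bounds. Your strictly decreasing $G$ with $G(0)=\mathcal R_0$ is an equivalent but cleaner replacement for the quadratic $P(\lambda^*)$ and the Descartes table.

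The genuine gap is the last claim: local asymptotic stability of $E^*$ for $\mathcal R_0>1$. You identify it as the main obstacle and list three strategies, but you carry out none of them. As written, that part of the proposition is not proved.

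The modulus argument you sketch also does not close as described. Set $p=\alpha+\mu+\lambda^*$ and $q=\mu+\lambda^*$. Eliminating the susceptible perturbations gives the characteristic equation
\[
1=\frac{\beta_NS_N^*}{s+\nu_N}\Bigl(1-\frac{\lambda^*}{s+p}\Bigr)+\frac{\beta_SS_S^*}{s+\nu_S}\Bigl(1-\frac{\lambda^*(s+p+q)}{(s+p)(s+q)}\Bigr).
\]
The second correction factor is not a harmless positive correction. At $s=0$ it equals $1-\lambda^*/q-\lambda^*/p$, which is negative when $\mu\ll\lambda^*$. On the imaginary axis its modulus exceeds $1$ for large $|\omega|$, since it behaves like $1+i\lambda^*/\omega$. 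So no termwise bound by the values at $s=0$ is available.

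The paper closes this step with the Castillo-Chavez--Song theorem. Taking $\beta$ as bifurcation parameter, it computes the center-manifold coefficients $a<0$ and $b>0$. That is exactly the case in which the endemic branch emerging at $\mathcal R_0=1$ is locally asymptotically stable. You invoke this theorem only as an optional check of direction, but it is what actually supplies stability. Note that it gives stability only for $\mathcal R_0$ close to $1$.

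To cover every $\mathcal R_0>1$, your Routh--Hurwitz fallback is the natural route. Let $\theta=\beta_NS_N^*/\nu_N$, so that $\beta_SS_S^*/\nu_S=1-\theta$ by the equilibrium identity, and let $\tilde\nu=(1-\theta)\nu_N+\theta\nu_S$. The matrix determinant lemma then gives the characteristic polynomial
\[
\chi(s)=s(s+\tilde\nu)(s+p)(s+q)+\lambda^*\bigl[\theta\nu_N(s+\nu_S)(s+q)+(1-\theta)\nu_S(s+\nu_N)(s+p+q)\bigr].
\]
Write $\chi(s)=s^4+c_1s^3+c_2s^2+c_3s+c_4$. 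All $c_i$ are positive, and $c_1c_2-c_3>0$ follows after a short computation. What your proof still has to establish is the remaining Hurwitz inequality $c_3(c_1c_2-c_3)>c_1^2c_4$.
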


\subsubsection{Stability of the disease-free equilibrium}
\label{sec:DFE-constant-stress}
The computation of the basic reproduction number using the next-generation matrix method \cite{VdDWatmough2002} enables us to obtain local asymptotic stability of the disease-free equilibrium when $\mathcal{R}_0<1$ and its instability when $\mathcal{R}_0>1$.

By applying Lyapunov's method for global asymptotic stability, we have the following Lemma.
\begin{lemma}\label{lm:gas-sta}
The DFE $E^0$ of \eqref{sys:model-ODE-constant-stress} is globally asymptotically stable in the biological invariant region $\Omega$ if $\mathcal{R}_0 \le 1$.
\end{lemma}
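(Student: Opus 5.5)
The plan is to combine a linear Lyapunov function on the infected compartments with LaSalle's invariance principle. The one preliminary step is to reduce to a smaller absorbing set on which the susceptible classes are dominated by their disease-free values.

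\textbf{Step 1: a smaller absorbing set.} Let
\[
\Omega_0=\{\mathbf X\in\Omega:\ S_N\le S_N^0,\ S_S\le S_S^0\}.
\]
Since $\lambda\ge 0$, we have $\dot S_N\le \Lambda-(\alpha+\mu)S_N$. Hence on the face $S_N=S_N^0$ the field points inward, and by comparison $\limsup_{t\to\infty} S_N(t)\le S_N^0$. Next, $\dot S_S\le \alpha S_N-\mu S_S$. On the face $S_S=S_S^0$ with $S_N\le S_N^0$ this gives $\dot S_S\le \alpha S_N^0-\mu S_S^0=0$. So $\Omega_0$ is positively invariant. Applying comparison to the $S_S$ inequality also gives $\limsup S_S\le \alpha S_N^0/\mu=S_S^0$.

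Thus every trajectory in $\Omega$ either enters $\Omega_0$ or approaches it, and its $\omega$-limit set lies in $\Omega_0$. I expect the main technical obstacle to be making this rigorous, since trajectories need not enter $\Omega_0$ in finite time. I would handle it by applying LaSalle to the (compact, invariant) $\omega$-limit set, which is contained in $\Omega_0$. Alternatively, one can work with the $\varepsilon$-enlarged box and let $\varepsilon\to 0$.

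\textbf{Step 2: the Lyapunov function.} Take
\[
L=\frac{\beta_N}{\nu_N}I_N+\frac{\beta_S}{\nu_S}I_S\ge 0.
\]
Along solutions of~\eqref{sys:model-ODE-constant-stress},
\[
\dot L=\lambda\Big(\frac{\beta_N S_N}{\nu_N}+\frac{\beta_S S_S}{\nu_S}\Big)-(\beta_N I_N+\beta_S I_S)
=\lambda\Big(\frac{\beta_N S_N}{\nu_N}+\frac{\beta_S S_S}{\nu_S}-1\Big).
\]
On $\Omega_0$ this gives $\dot L\le\lambda(\mathcal R_{01}+\mathcal R_{02}-1)=\lambda(\mathcal R_0-1)\le 0$ by~\eqref{eq:R0-determ}.

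\textbf{Step 3: LaSalle.} We identify the largest invariant set $M$ contained in $\{\dot L=0\}\cap\Omega_0$, treating the two cases separately.
\begin{itemize}
\item If $\mathcal R_0<1$, then $\dot L=0$ forces $\lambda=0$, hence $I_N=I_S=0$ (assuming $\beta_N,\beta_S>0$).
\item If $\mathcal R_0=1$, then $\dot L=0$ also allows $S_N=S_N^0$ and $S_S=S_S^0$ with $\lambda>0$. Invariance rules this out: on that set $\dot S_N=-\lambda S_N^0$, which is nonzero when $\lambda>0$, contradicting $S_N\equiv S_N^0$. So again $\lambda\equiv 0$ on $M$.
\end{itemize}
On $\{I_N=I_S=0\}$ the remaining system is linear:
\[
\dot S_N=\Lambda-(\alpha+\mu)S_N,\qquad \dot S_S=\alpha S_N-\mu S_S,\qquad \dot R=-\mu R.
\]
Its solutions converge to $(S_N^0,S_S^0,0)$, so the only invariant set in $M$ is $\{E^0\}$.

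LaSalle then gives global attractivity of $E^0$ in $\Omega$. Combined with local stability, which follows from the next-generation matrix result for $\mathcal R_0<1$ and from the Lyapunov function itself in the boundary case, this yields global asymptotic stability.
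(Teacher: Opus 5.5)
Your proposal is correct and follows the paper's proof. It uses the same linear Lyapunov function $\frac{\beta_N}{\nu_N}I_N+\frac{\beta_S}{\nu_S}I_S$, the same factorization of its derivative as $\lambda\bigl(\frac{\beta_N S_N}{\nu_N}+\frac{\beta_S S_S}{\nu_S}-1\bigr)$, and the same LaSalle conclusion. The paper simply asserts $S_N\le S_N^0$ and $S_S\le S_S^0$ on all of $\Omega$, which holds only asymptotically; your absorbing box $\Omega_0$ with the $\omega$-limit argument, and your use of $\dot S_N=-\lambda S_N^0$ to rule out $\lambda>0$ when $\mathcal R_0=1$, supply the justification the paper skips (rely on the $\omega$-limit route rather than the $\varepsilon$-box, which does not give $\dot L\le 0$ when $\mathcal R_0=1$).
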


\begin{proof}
We construct a Lyapunov function $V: \Omega \to \mathbb{R}$. Let us define the constants $c_N = \beta_N/\nu_N$ and $c_S = \beta_S/\nu_S$, where $\nu_j = \gamma_j + \mu + d_j$ for $j=N,S$. Consider the following linear Lyapunov function candidate:
\begin{equation}\label{eq:Lyapunov-func}
V(t) = c_N I_N(t) + c_S I_S(t) = \frac{\beta_N}{\nu_N} I_N + \frac{\beta_S}{\nu_S} I_S.
\end{equation}
Since $I_N, I_S \ge 0$, it follows that $V(t) \ge 0$ for all variables in $\Omega$, and $V(t) = 0$ if and only if $I_N = I_S = 0$.

Differentiating $V$ with respect to time along the trajectories of \eqref{sys:model-ODE-constant-stress} yields
\[
\frac{dV}{dt} = \frac{\beta_N}{\nu_N} \dot{I}_1 + \frac{\beta_S}{\nu_S} \dot{I}_2.
\]
Substituting the expressions for $\dot{I}_1$ and $\dot{I}_2$:
\[
\begin{aligned}
\frac{dV}{dt} &= \frac{\beta_N}{\nu_N} \left[ (\beta_N I_N + \beta_S I_S) S_N - \nu_N I_N \right] + \frac{\beta_S}{\nu_S} \left[ (\beta_N I_N + \beta_S I_S) S_S - \nu_S I_S \right] \\[6pt]
&= \frac{\beta_N}{\nu_N} (\beta_N I_N + \beta_S I_S) S_N - \beta_N I_N + \frac{\beta_S}{\nu_S} (\beta_N I_N + \beta_S I_S) S_S - \beta_S I_S.
\end{aligned}
\]
We factor out the total force of infection $\lambda(t) = \beta_N I_N + \beta_S I_S$:
\[
\begin{aligned}
\frac{dV}{dt} &= (\beta_N I_N + \beta_S I_S) \left( \frac{\beta_N S_N}{\nu_N} \right) - \beta_N I_N + (\beta_N I_N + \beta_S I_S) \left( \frac{\beta_S S_S}{\nu_S} \right) - \beta_S I_S \\[6pt]
&= (\beta_N I_N + \beta_S I_S) \left[ \frac{\beta_N S_N}{\nu_N} + \frac{\beta_S S_S}{\nu_S} \right] - (\beta_N I_N + \beta_S I_S) \\[6pt]
&= \lambda(t) \left[ \left( \frac{\beta_N S_N}{\nu_N} + \frac{\beta_S S_S}{\nu_S} \right) - 1 \right].
\end{aligned}
\]
Recall that in the invariant region $\Omega$, the susceptible populations are bounded asymptotically by their disease-free values, i.e., $S_N(t) \le S_N^0$ and $S_S(t) \le S_S^0$. Therefore, we have:
\[
\frac{\beta_N S_N}{\nu_N} + \frac{\beta_S S_S}{\nu_S} \;\le\; \frac{\beta_N S_N^0}{\nu_N} + \frac{\beta_S S_S^0}{\nu_S} = \mathcal{R}_0.
\]
Using this inequality, the derivative of the Lyapunov function satisfies:
\begin{equation}
\frac{dV}{dt} \le \lambda(t) (\mathcal{R}_0 - 1).
\end{equation}
If $\mathcal{R}_0 \le 1$ and considering $\lambda(t) \ge 0$, it follows that $\frac{dV}{dt} \le 0$ for all $t$.

Furthermore, $\frac{dV}{dt} = 0$ implies that either $\lambda(t) = 0$ (which implies $I_N=I_S=0$) or $\mathcal{R}_0 = 1$ and $S_N=S_N^0, S_S=S_S^0$. In either case, the largest invariant set contained in the set where $\dot{V}=0$ is the singleton $\{E^0\}$. By the LaSalle Invariance Principle \cite{lasalle1976stability}, every solution starting in $\Omega$ approaches $E^0$ as $t \to \infty$. Thus, $E^0$ is globally asymptotically stable.
\end{proof}

\subsubsection{Endemic equilibrium and its stability}
\label{sec:EEP-constant-stress}
Let $E^* = (S_N^*, S_S^*, I_N^*, I_S^*, R^*)$ be an  endemic equilibrium of \eqref{sys:model-ODE-constant-stress}. 
Defining the total force of infection as
\[
\lambda^* = \beta_N I_N^* + \beta_S I_S^*,
\]
we express all steady-state values in terms of $\lambda^*$:
\[
\begin{aligned}
S_N^* &= \frac{\Lambda}{\alpha + \mu + \lambda^*}, &
S_S^* &= \frac{\alpha S_N^*}{\mu + \lambda^*}, \\[4pt]
I_N^* &= \frac{\lambda^* S_N^*}{\nu_N}, &
I_S^* &= \frac{\lambda^* S_S^*}{\nu_S}, \; R^*=\frac{\gamma_N I_N^*+\gamma_SI_S^*}{\mu}.
\end{aligned}
\]
Substituting these relations into the definition of $\lambda^*$ yields a polynomial
\begin{equation}\label{eq:lambda-poly}
P(\lambda^*)=A_2 \lambda^{*2} + A_1 \lambda^* + A_0 = 0,
\end{equation}
where the coefficients are
\[
\begin{aligned}
A_2 &= \nu_N\nu_S, \\[4pt]
A_1 &= \nu_N\nu_S\mu+\nu_N\nu_S(\alpha + \mu)(1 - \mathcal{R}_{01}), \\[4pt]
A_0 &= \nu_N\nu_S\mu(\alpha + \mu)(1 - \mathcal{R}_0).
\end{aligned}
\]

The sign structure of $(A_2, A_1, A_0)$ determines whether a positive root $\lambda^*>0$ exists. 
Applying Descartes sign rules for polynomial $P(\lambda^*)$ , the number of positive
solutions is given in Table~\ref{tab:descartes}.

\begin{table}[h!]
\centering
\caption{Descartes rule for $\lambda^*$-polynomial~\eqref{eq:lambda-poly}.}
\label{tab:descartes}
\begin{tabular}{ccccccl}
\hline
\textbf{Case} & $\mathcal{R}_{01}$ & $\mathcal{R}_0$ & $\operatorname{sign}(A_2)$ & $\operatorname{sign}(A_1)$ & $\operatorname{sign}(A_0)$ & Number of positive solutions \\
\hline
1 & $<1$ & $<1$ & $+$ & $+$ & $+$ & No solution\\
2 & $<1$ & $>1$ & $+$ & $+$ & $-$ & One solution\\
3 & $>1$ & $>1$ & $+$ & $-$ & $-$ & One solution\\
\hline
\end{tabular}
\end{table}
The polynomial equation~\eqref{eq:lambda-poly} admits a unique positive root $\lambda^*>0$ if and only if $\mathcal{R}_0>1$. With the positive value of $\lambda^*$, $E^*$ is positive. 
Then, we obtain the following Lemma about the existence of the endemic equilibrium
\begin{lemma}
If $\mathcal{R}_0>1$, \eqref{sys:model-ODE-constant-stress} admits a unique endemic equilibrium $E^*$.
\end{lemma}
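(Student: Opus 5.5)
The plan is to reduce the equilibrium problem to a single scalar equation in the force of infection $\lambda^*$, and then show that this equation has exactly one positive root precisely when $\mathcal{R}_0>1$. Let $E^*$ be any equilibrium of \eqref{sys:model-ODE-constant-stress} with $(I_N^*,I_S^*)\neq(0,0)$. From the $I_N$ and $I_S$ equations we get $I_N^*=\lambda^*S_N^*/\nu_N$ and $I_S^*=\lambda^*S_S^*/\nu_S$. If $\lambda^*$ were $0$, these would force $I_N^*=I_S^*=0$, so in fact $\lambda^*>0$. The $S_N$ and $S_S$ equations then give
\[
S_N^*=S_N(\lambda^*):=\frac{\Lambda}{\alpha+\mu+\lambda^*},\qquad S_S^*=S_S(\lambda^*):=\frac{\alpha\Lambda}{(\alpha+\mu+\lambda^*)(\mu+\lambda^*)},
\]
which are exactly the expressions listed before \eqref{eq:lambda-poly}.

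Next I substitute these into $\lambda^*=\beta_N I_N^*+\beta_S I_S^*$ and divide by $\lambda^*>0$. This shows that endemic equilibria correspond one-to-one to positive solutions of
\[
G(\lambda):=\frac{\beta_N S_N(\lambda)}{\nu_N}+\frac{\beta_S S_S(\lambda)}{\nu_S}=1 .
\]
Rather than work directly with the sign table for $P$ in Table~\ref{tab:descartes}, I will use this fixed-point form. The function $G$ is continuous on $[0,\infty)$ and strictly decreasing there, because both $S_N(\lambda)$ and $S_S(\lambda)$ are strictly decreasing in $\lambda$ (assuming $\beta_N>0$ or $\beta_S\alpha>0$). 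Its endpoint values are $G(0)=\mathcal{R}_{01}+\mathcal{R}_{02}=\mathcal{R}_0$ by \eqref{eq:DFE-constant-stress} and \eqref{eq:R0-determ}, and $G(\lambda)\to0$ as $\lambda\to\infty$. By the intermediate value theorem and strict monotonicity, $G(\lambda)=1$ has exactly one positive root when $\mathcal{R}_0>1$, and none when $\mathcal{R}_0\le1$. The latter case is consistent with Lemma~\ref{lm:gas-sta}.

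With this unique $\lambda^*>0$, the remaining coordinates are $S_N^*$ and $S_S^*$ as above, then $I_N^*,I_S^*$, and finally $R^*=(\gamma_NI_N^*+\gamma_SI_S^*)/\mu$. All of these are positive, so $E^*$ lies in the interior of $\R_+^5$. It also lies in $\Omega$, since at equilibrium $\mu N^*=\Lambda-d_NI_N^*-d_SI_S^*\le\Lambda$. Uniqueness of $E^*$ follows because every coordinate is determined by $\lambda^*$. For consistency with the paper I will also note the link to $P$: multiplying $G(\lambda)=1$ by $\nu_N\nu_S(\alpha+\mu+\lambda)(\mu+\lambda)$ gives a quadratic whose constant term is $\nu_N\nu_S\mu(\alpha+\mu)(1-\mathcal{R}_0)$, which matches $A_0$. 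Its positive roots are those of $G=1$, so this also justifies the ``one solution'' entries of Table~\ref{tab:descartes}.

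The only delicate point, and the reason I bypass Descartes' rule, is the middle coefficient $A_1$. Its exact form, and hence its sign, takes careful algebra to verify. In particular, case 2 of the table (with $\mathcal{R}_{01}<1<\mathcal{R}_0$) relies on that sign, and Descartes' rule only bounds the number of positive roots rather than guaranteeing one. The monotonicity of $G$ avoids this issue and gives existence and uniqueness at once. The only step I would check explicitly is the strict decrease of $S_S(\lambda)$, which is immediate because it is a product of two positive decreasing factors.
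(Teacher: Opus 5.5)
Your proof is correct. It shares the paper's first step, expressing every coordinate of $E^*$ through $\lambda^*$ and reducing to one scalar equation, but it settles that equation differently. The paper clears denominators to obtain the quadratic $P(\lambda^*)=A_2\lambda^{*2}+A_1\lambda^*+A_0$ and counts positive roots from the sign pattern of $(A_2,A_1,A_0)$ via Descartes' rule (Table~\ref{tab:descartes}). You instead keep the fixed-point form $G(\lambda)=1$ and use that $G$ decreases strictly from $G(0)=\mathcal{R}_0$ to $0$. Your route gives existence and uniqueness at once and needs no coefficient algebra. It also handles $\mathcal{R}_0\le 1$, including the boundary case $\mathcal{R}_0=1$ that the table omits. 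It carries over to variants, e.g.\ more stress classes, where $G$ stays monotone but the cleared polynomial has higher degree. The paper's route has the advantage of an explicit quadratic, hence a closed form for $\lambda^*$.

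Your stated reason for avoiding Descartes is mistaken, though. The rule says the number of positive roots equals the number of sign changes minus an even nonnegative integer, so a single sign change gives exactly one positive root, not merely at most one. When $\mathcal{R}_0>1$ you have $A_2>0$ and $A_0<0$, hence exactly one sign change whatever the sign of $A_1$. Equivalently, $A_0/A_2<0$ forces two real roots of opposite signs. So $A_1$ plays no role in the argument.

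For the record, the paper's expression for $A_1$ is correct. Its case-3 entry $\operatorname{sign}(A_1)=-$, however, holds only when $\mathcal{R}_{01}>1+\mu/(\alpha+\mu)$; this does not affect the conclusion.

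Finally, your step from $\lambda^*\neq 0$ to $\lambda^*>0$ tacitly uses that the equilibrium lies in $\R_+^5$. That is the intended meaning of an endemic equilibrium, and uniqueness is only claimed there.
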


To finish the proof of Proposition~\ref{prop:behaviour-constant-stress}, we investigate the local stability of the endemic equilibrium of \eqref{sys:model-ODE-constant-stress}.
Let us consider that $\beta_N=\beta$ and $\beta_S=\sigma\beta$, then the basic reproduction number can be written as
\[
\mathcal R_0(\beta) = \beta\,\kappa, \qquad
\kappa = \frac{S_N^0}{\nu_N}
         + \frac{\sigma S_S^0}{\nu_S}.
\]
Hence the threshold condition $\mathcal R_0=1$ corresponds to
\begin{equation}\label{eq:beta_star}
    \beta^* = \frac{1}{\kappa} =\frac{\nu_N\nu_S}{\nu_NS_N^0+\nu_N\sigma S_S^0}.
\end{equation}

We now apply \cite[Theorem~4.1]{castillo2004dynamical} with the transmission
coefficient $\beta$ as bifurcation parameter.  The proof is given in Appendix~\ref{app:proof-LAS-EE}.

\subsection{Time-dependent stress}

\subsubsection{Disease-free trajectory and reproduction number}

When infection is absent ($I_N = I_S = 0$), system~\eqref{eq:model-ODE-alpha-t} reduces to a non-autonomous subsystem for the susceptible classes,
\begin{equation}\label{eq:DFE-trajectory}
\begin{cases}
\displaystyle \frac{dS_N}{dt} = \Lambda - \big(\alpha(t) + \mu\big) S_N,\\[6pt]
\displaystyle \frac{dS_S}{dt} = \alpha(t) S_N - \mu S_S,
\end{cases}
\end{equation}
with $I_N^0(t) = I_S^0(t) = R^0(t) = 0$.
Thus, in contrast to the constant-stress case, the disease-free state is no longer a fixed point, but a time-dependent trajectory
\[
E^0(t)=  \big(S_N^0(t), S_S^0(t), 0, 0, 0\big).
\]
where $(S_N^0(t), S_S^0(t))$ is the solution of the solution of~\eqref{eq:DFE-trajectory}.

To formally analyze the stability, we write the dynamics of the infected compartments $I(t) = (I_N(t), I_S(t))^T$ near the disease-free trajectory of \eqref{eq:model-ODE-alpha-t}. 
The linearized system takes the form of a non-autonomous linear differential equation:
\begin{equation}\label{eq:matrix-system}
\frac{dI}{dt} = \big( F(t) - V \big) I(t) := J(t) I(t),
\end{equation}
where
\[
V = \begin{pmatrix} \nu_N & 0 \\ 0 & \nu_S \end{pmatrix}, \quad
F(t) = \begin{pmatrix} \beta_N S_N^0(t) & \beta_S S_N^0(t) \\ \beta_N S_S^0(t) & \beta_S S_S^0(t) \end{pmatrix}.
\]
The matrix $J(t) = F(t) - V$ is time-dependent and its off-diagonal entries are non-negative (since $\beta_N, \beta_S, S_{N}^0(t),S_{S}^0(t \ge 0$). Such a matrix is called a \emph{Metzler matrix}, which implies that the system is \emph{cooperative}. This property allows us to use the standard Comparison Theorem for differential equations~\cite{SmithMonotoneSystems}.

We define the spectral radius of a constant matrix $M$ as $s(M) := \max \{ \text{Re}(\lambda) : \lambda \in \sigma(M) \}$.
By analogy with the autonomous case, this suggests the definition of a \emph{time-dependent  reproduction number}
\begin{equation}\label{eq:Rt-def}
\mathcal{R}(t):=\mathcal R(\alpha(t))
= \frac{\beta_N S_N^0(t)}{\nu_N}
+ \frac{\beta_S S_S^0(t)}{\nu_S}.
\end{equation}

We now assume that the time-varying stress rate is uniformly bounded,
\begin{equation}\label{eq:alpha-bounds}
0\leq \alpha_{\min} \le \alpha(t) \le \alpha_{\max}
\qquad \text{for all } t \ge 0,
\end{equation}
for some fixed constants $\alpha_{\min}$ and $\alpha_{\max}$.
For each constant value $\alpha$, the autonomous system~\eqref{sys:model-ODE-constant-stress} admits a basic reproduction number
\begin{equation}\label{eq:R0-of-alpha}
\mathcal{R}_0(\alpha)
= \frac{\Lambda}{\alpha + \mu}
\left(
\frac{\beta_N}{\nu_N}
+ \frac{\alpha\,\beta_S}{\mu\,\nu_S}
\right),
\end{equation}
Biologically, stressed fish are assumed to be at least as infectious or to remain infectious for longer than non-stressed fish, which we express as
\begin{equation}\label{eq:stress-bio-assumption}
\frac{\beta_S}{\nu_S} \;\ge\; \frac{\beta_N}{\nu_N}.
\end{equation}
Under this assumption, one has
\[
\mathcal{R}_0(\alpha_{\min})
\;\le\;
\mathcal{R}_0(\alpha(t))
\;\le\;
\mathcal{R}_0(\alpha_{\max})
\qquad \text{for all } t,
\]

Indeed, for each constant value $\alpha$, the autonomous system \eqref{sys:model-ODE-constant-stress} has basic reproduction number $\mathcal R_0(\alpha)$ given by \eqref{eq:R0-of-alpha}, which can be written
\[
\mathcal{R}_0(\alpha)
= \Lambda\,\frac{A + \alpha B}{\alpha + \mu},
\qquad
A = \frac{\beta_N}{\nu_N},\quad
B = \frac{\beta_S}{\mu\,\nu_S},
\]
and differentiating with respect to $\alpha$ yields
\[ \frac{\partial  \mathcal R_0 (\alpha) }{\partial \alpha}
= \Lambda\,\frac{B(\alpha+\mu) - (A+\alpha B)}{(\alpha+\mu)^2}
= \Lambda\,\frac{B\mu - A}{(\alpha+\mu)^2}.
\]
Since $B\mu = \beta_S/\nu_S$, the biological condition
\eqref{eq:stress-bio-assumption} implies $B\mu \ge A$, hence
$ \frac{\partial  \mathcal R_0 (\alpha) }{\partial \alpha}\ge 0$ which implies that $\mathcal R_0(\alpha)$ increases as a function of $\alpha$.
In particular, if $0\le \alpha_{\min} \le \alpha(t) \le \alpha_{\max}$
for all $t\ge 0$, then
\[ \mathcal R_{01}=\mathcal R_0(0)\le
\mathcal R_0(\alpha_{\min})
\;\le\;
\mathcal R_0(\alpha(t))
\;\le\;
\mathcal R_0(\alpha_{\max})
\qquad \text{for all } t\ge 0.
\]

Under these assumptions, the time-dependent reproduction number $\mathcal R(t)$ remains bounded between the two autonomous thresholds associated with the extremal stress rates.
To illustrate this, we can compare several stress profiles $\alpha(t)$ with the corresponding trajectories of $\mathcal R(t)$ along the disease-free solution $E^0(t)$: for each scenario, $\mathcal R(t)$ fluctuates in time but never crosses the lower bound $\mathcal R_0(\alpha_{\min})$ nor the upper bound $\mathcal R_0(\alpha_{\max})$.

Figure~\ref{fig:Rt-bounds} shows the dynamics of $\mathcal R(t)$ using the ODE~\eqref{eq:DFE-trajectory} at the DFE. When the ODE converges, the figure shows the boundedness of this time-dependent basic reduction number, as shown in equation~\eqref{eq:Rt-def}. The parameter values used for the rest of the simulations are \(\mu=0.002\), \(d_N=0.01\), and \(d_S=0.01\) (see Table~\ref{tab:parameters}); the remaining parameters are as in Table~\ref{tab:parameters}.

\begin{figure}[H]
    \centering
    \includegraphics[width=0.8\linewidth]{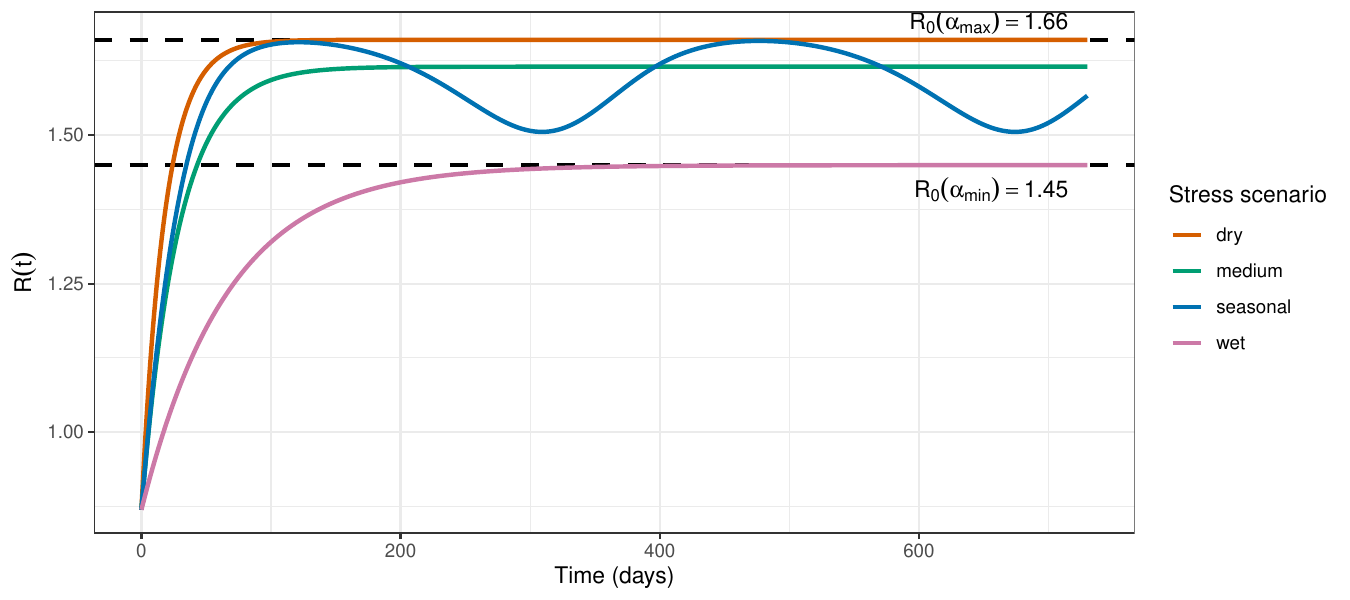}
    \caption{Time-dependent reproduction number $\mathcal R(t)$ under bounded stress.
    Each coloured curve represents $\mathcal R(t)$ for a different stress profile $\alpha(t)$ satisfying $\alpha_{\min} \le \alpha(t) \le \alpha_{\max}$.
    The two horizontal lines correspond to the autonomous thresholds $\mathcal R_0(\alpha_{\min})$ and $\mathcal R_0(\alpha_{\max})$.}
    \label{fig:Rt-bounds}
\end{figure}

\subsubsection{Endemic equilibrium and its stability}

A fundamental result in mathematical epidemiology links the basic reproduction number to this spectral abscissa \cite[Proof~Theorem 2]{VdDWatmough2002}:
\begin{equation}\label{eq:spectral-relation}
\mathcal{R}_0(\alpha) < 1 \iff s\big(F(\alpha) - V\big) < 0.
\end{equation}
We have the following result about the existence and the persistence of the disease.
\begin{lemma}
\label{lem:stress-bounds}
Assume that $\alpha(t)$ satisfies~\eqref{eq:alpha-bounds} and that biological condition~\eqref{eq:stress-bio-assumption} holds.
\begin{itemize}
    \item[\emph{(i)}] If $\mathcal{R}_0(\alpha_{\max}) < 1$, the disease-free trajectory is globally asymptotically stable. In particular, the disease-free trajectory of~\eqref{eq:model-ODE-alpha-t} is uniformly asymptotically stable.
    \item[\emph{(ii)}] If $\mathcal{R}_0(\alpha_{\min}) > 1$, the disease-free trajectory is unstable and the infection can invade even under the low-stress regime.
\end{itemize}
\end{lemma}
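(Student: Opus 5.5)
The plan is to work throughout with the weighted infective mass already used in Lemma~\ref{lm:gas-sta}, $V=\frac{\beta_N}{\nu_N}I_N+\frac{\beta_S}{\nu_S}I_S$. Along any solution of~\eqref{eq:model-ODE-alpha-t} it satisfies
\[
\dot V=\lambda(t)\big(\Phi(S_N,S_S)-1\big),\qquad \Phi(S_N,S_S):=\frac{\beta_N}{\nu_N}S_N+\frac{\beta_S}{\nu_S}S_S .
\]
Note that $\Phi(S_N^0(t),S_S^0(t))=\mathcal R(t)$. The key structural remark is that~\eqref{eq:stress-bio-assumption} makes $\Phi$ nonincreasing in $S_N$ once the total $S_N+S_S$ is fixed: moving mass from $S_N$ to $S_S$ can only increase $\Phi$. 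So everything reduces to asymptotic bounds on $S_N$ and on $S_N+S_S$, obtained by scalar comparison.

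\textbf{Step 1: bounds along the disease-free trajectory.} From~\eqref{eq:DFE-trajectory} and~\eqref{eq:alpha-bounds}, the comparison principle gives
\[
\frac{\Lambda}{\alpha_{\max}+\mu}-\varepsilon\le S_N^0(t)\le \frac{\Lambda}{\alpha_{\min}+\mu}+\varepsilon
\]
for all large $t$. Summing the two equations of~\eqref{eq:DFE-trajectory} gives $\frac{d}{dt}(S_N^0+S_S^0)=\Lambda-\mu(S_N^0+S_S^0)$, so the total converges to $\Lambda/\mu$. Write $\Phi$ as $AS_N+B\mu(T-S_N)$ with $T=S_N+S_S$. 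Evaluating at $T=\Lambda/\mu$ and at the extreme values of $S_N$ reproduces exactly $\mathcal R_0(\alpha_{\max})$ and $\mathcal R_0(\alpha_{\min})$ from~\eqref{eq:R0-of-alpha}. Hence, for large $t$,
\[
\mathcal R_0(\alpha_{\min})-\varepsilon'\le\mathcal R(t)\le\mathcal R_0(\alpha_{\max})+\varepsilon' .
\]
This makes the numerically observed bounds of Figure~\ref{fig:Rt-bounds} rigorous in the limit.

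\textbf{Step 2: part (ii).} Apply $V$ to the linearized cooperative system~\eqref{eq:matrix-system}, for which $\dot V=\lambda(\mathcal R(t)-1)$. By Step 1, choosing $\varepsilon'$ small, $\mathcal R(t)-1\ge\delta>0$ for large $t$. Since $\lambda\ge c\,V$ with $c=\min(\nu_N,\nu_S)>0$, this gives $\dot V\ge c\delta V$. So every nonzero nonnegative solution of the linearization grows exponentially, and the disease-free trajectory is unstable. The comparison theorem for the Metzler system $J(t)$ transfers this instability to the nonlinear system near $E^0(t)$. Concretely, in a neighbourhood where $S_N,S_S$ are within $\eta$ of $S_N^0,S_S^0$, one has $\dot I\ge (F_\eta(t)-V)I$, and the same $V$-estimate applies.

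\textbf{Step 3: part (i), and the main obstacle.} For the linearization, Step 1 gives $\dot V\le -c\delta V$, hence uniform exponential decay, which is uniform asymptotic stability of the linear system. For global stability of the nonlinear system I would argue as follows.
\begin{itemize}
\item First, $\frac{d}{dt}(S_N+S_S)\le\Lambda-\mu(S_N+S_S)$ gives $\limsup(S_N+S_S)\le\Lambda/\mu$.
\item The obstacle is that $\Phi$ requires a \emph{lower} bound on $S_N$, while infection depletes $S_N$.
\item I would handle this by a bootstrap. As $S_N$ shrinks, mass does not move to $S_S$ but is lost to infection. So the right quantity to bound is $\Phi\le A S_N+B\mu S_S$ with $S_S\le\Lambda/\mu-S_N-(\text{infected})$. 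I would then show directly that $\Phi(S_N,S_S)\le\max_{S_N\ge S_N^{\min}}\Phi$, using that $S_S$ is itself dominated by the solution of $\dot S_S=\alpha(t)S_N-\mu S_S$.
\item Concretely, compare $S_S$ with $\int_0^t e^{-\mu(t-s)}\alpha(s)S_N(s)\,ds$ and bound $\Phi$ by $\Lambda\sup_\alpha(A+\alpha B)/(\alpha+\mu)=\mathcal R_0(\alpha_{\max})$ plus $\varepsilon$.
\end{itemize}
With $\limsup\Phi\le\mathcal R_0(\alpha_{\max})+\varepsilon<1$ in hand, $\dot V\le -c\delta V$ eventually, so $I\to0$ exponentially. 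The asymptotically autonomous susceptible equations then force $(S_N,S_S)\to(S_N^0(t),S_S^0(t))$. This last step, bounding $\Phi$ along infected trajectories, is where I expect the real difficulty; if the bootstrap fails, I would fall back on proving only uniform asymptotic stability via the linearization from Step 2.
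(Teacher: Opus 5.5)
The gap is in the global half of (i), and the obstacle you name there is misdiagnosed. No lower bound on $S_N$ along an \emph{infected} trajectory is needed. The lower bound on $S_N$ enters only along the disease-free trajectory, where your Step~1 already supplies it. What is missing is a componentwise comparison of the nonlinear susceptibles with $(S_N^0(t),S_S^0(t))$:
\begin{itemize}
\item From $\dot S_N\le\Lambda-(\alpha(t)+\mu)S_N$, compare with the solution of $\dot y=\Lambda-(\alpha(t)+\mu)y$, $y(0)=S_N(0)$, whose distance to $S_N^0(t)$ decays like $e^{-\mu t}$. This gives $S_N(t)\le S_N^0(t)+Ce^{-\mu t}$.
\item Then $\dot S_S\le\alpha(t)S_N-\mu S_S\le\alpha(t)S_N^0(t)+\alpha_{\max}Ce^{-\mu t}-\mu S_S$ gives $S_S(t)\le S_S^0(t)+C'(1+t)e^{-\mu t}$.
\item Since $\Phi$ has nonnegative coefficients, it is nondecreasing in each argument. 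Hence $\Phi(S_N(t),S_S(t))\le\mathcal R(t)+o(1)\le\mathcal R_0(\alpha_{\max})+\varepsilon'+o(1)$, and the rest of your Step~3 goes through.
\end{itemize}
The route you actually sketch, through $T=S_N+S_S\le\Lambda/\mu$, cannot close. It discards the bound on $S_S$, and once $S_N$ is depleted it only yields $\Phi\le\Lambda\beta_S/(\mu\nu_S)=\lim_{\alpha\to\infty}\mathcal R_0(\alpha)$. That can exceed $1$ while $\mathcal R_0(\alpha_{\max})<1$. Your convolution bullet works only after inserting $S_N(s)\le S_N^0(s)+o(1)$ under the integral, which is exactly the comparison above. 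Without it, your fallback yields only linearized stability, which is weaker than (i).

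With this repair your argument is a complete proof, along a genuinely different route from the paper's. The paper bounds $J(t)$ entrywise by the constant matrix $\bar J=F(\alpha_{\max})-\mathrm{diag}(\nu_N,\nu_S)$ and compares with $e^{\bar Jt}$. For (ii) it invokes frozen-time bounds $s(J(t))>0$. Your weighted sum instead uses the rank-one structure of $F(t)$, namely $(\beta_N/\nu_N,\beta_S/\nu_S)\bigl(F(t)-\mathrm{diag}(\nu_N,\nu_S)\bigr)=(\mathcal R(t)-1)(\beta_N,\beta_S)$. So only the scalar $\mathcal R(t)$ must be controlled, and that is precisely what $\beta_S/\nu_S\ge\beta_N/\nu_N$ controls through your Step~1. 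This buys real robustness:
\begin{itemize}
\item The entrywise inequality $J(t)\le\bar J$ fails in the first row, because $\liminf_{t\to\infty}S_N^0(t)\ge\Lambda/(\alpha_{\max}+\mu)$, strictly so whenever $\alpha$ stays uniformly below $\alpha_{\max}$.
\item Frozen-time spectral information does not in general decide stability of a nonautonomous linear system, whereas your estimates $\dot V\le-c\delta V$ and $\dot V\ge c\delta V$ do.
\item Your Step~3 treats the nonlinear system directly, while the paper's comparison is carried out only for the linearization.
\end{itemize}
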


\begin{proof}
\noindent\emph{(i)}
Let $\bar{\alpha} = \alpha_{\max}$. Since the reproduction number $\mathcal{R}_0(\alpha)$ is monotonically increasing with respect to stress, the condition $\mathcal{R}_0(\alpha_{\max}) < 1$ implies that the worst-case autonomous matrix $\bar{J} = F(\bar{\alpha}) - V$ satisfies the stability condition:
\[
s(\bar{J}) < 0.
\]
Under the assumption that stress increases overall transmission potential, the time-dependent matrix $J(t)$ is bounded by the worst-case constant matrix $\bar{J}$:
\[
J(t) \le \bar{J} \quad \text{for all } t \ge 0.
\]
By the Comparison Theorem for cooperative systems \cite{SmithMonotoneSystems}, the solution $I(t)$ of system~\eqref{eq:matrix-system} is bounded by the solution $Y(t)$ of the constant coefficient system $\frac{dY}{dt} = \bar{J} Y$ with $Y(0) = I(0)$:
\[
0 \le I(t) \le Y(t) = e^{\bar{J}t} I(0).
\]
Since $s(\bar{J}) < 0$, the matrix exponential $e^{\bar{J}t} \to 0$ as $t \to \infty$. Since $(I_N(t),I_S(t)) \to (0,0)$, by replacing into~\eqref{eq:model-ODE-alpha-t} we obtain~\eqref{eq:DFE-trajectory}, which implies that $(S_N(t),S_S(t)) \to (S_N^0(t),S_S^0(t))$. Thus, $E^0(t)$ is global asymptotic stability.

\medskip
\noindent\emph{(ii)}
Conversely, let $\underline{\alpha} = \alpha_{\min}$. If $\mathcal{R}_0(\alpha_{\min}) > 1$, then by monotonicity $\mathcal{R}_0(\alpha(t)) > 1$ for all $t$.
This implies that the spectral bound of the instantaneous matrix is strictly positive:
\[
s(J(t)) \ge s(F(\underline{\alpha}) - V) > 0.
\]
Since the system~\eqref{eq:matrix-system} is linear and cooperative, and the spectral radius is uniformly positive, the origin is unstable. Any small perturbation $I(0) > 0$ will grow initially, implying persistence of the infection. Then, there exists at least one admissible stress
trajectory for which persistence occurs
\end{proof}

Figure~\ref{fig:ODE-I1-I2-alpha-t} illustrates how the four water–stress
scenarios shape the epidemic profiles in the deterministic model.
In all cases, the stressed infected class $I_S$ reaches
higher peaks than the non-stressed class $I_N$, reflecting the higher
transmission rates in stressed fish.
Dry and medium conditions, which correspond to persistently elevated
stress, generate the largest and earliest epidemic peaks, whereas the
wet scenario produces much smaller and later outbreaks in both $I_N$ and $I_S$,
consistent with a reproduction number close to the invasion threshold.
Under seasonal stress, infections occur in a sequence of waves whose
timing follows the oscillations in the stress rate $\alpha(t)$.
These simulations highlight that environmental stress does not merely
change the magnitude of infection, but also reshapes its timing and
distribution between non-stressed and stressed hosts.

\begin{figure}[H]
    \centering
    \includegraphics[width=0.8\textwidth]{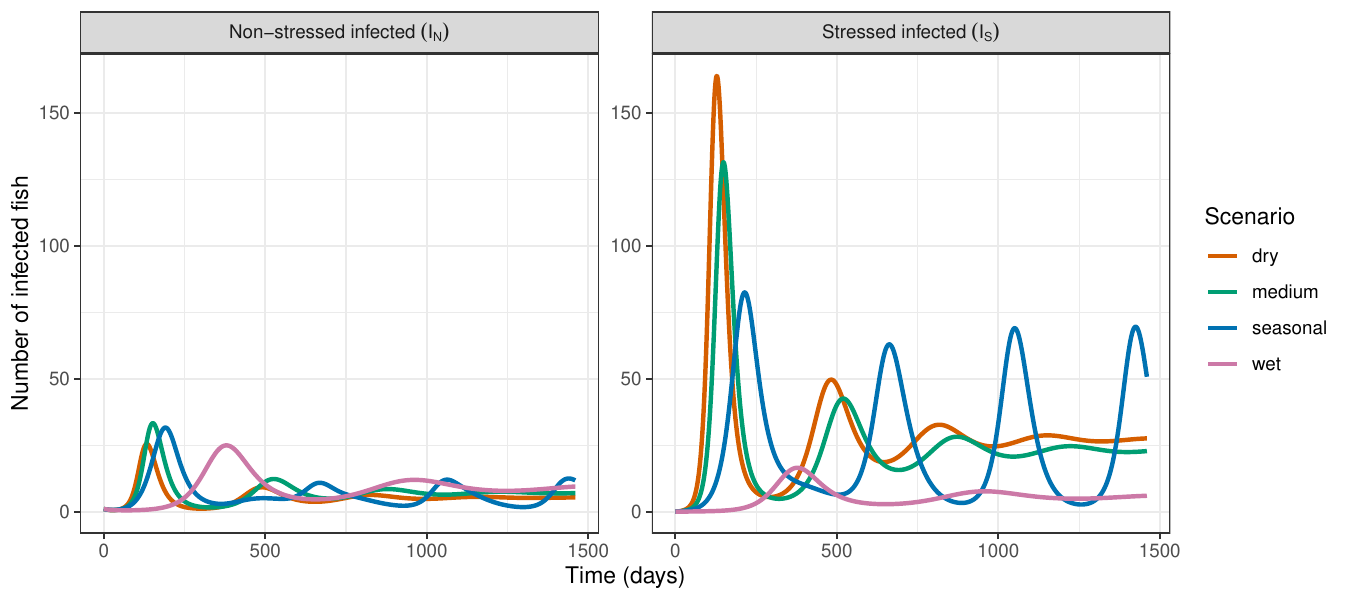}
    \caption{Deterministic trajectories for the stress–infection model with time-varying stress rate $\alpha(t)$ with different basic reproduction numbers in Figure~\ref{fig:Rt-bounds}.
    The left panel shows the number of non-stressed infected fish $I_N(t)$, the right panel shows the number of stressed infected fish $I_S(t)$, under four water-stress scenarios (dry, medium, seasonal, and wet).}
    \label{fig:ODE-I1-I2-alpha-t}
\end{figure}

\section{Stochastic model and analysis} \label{sec:stochastic}

\subsection{Continuous-time Markov chain }

In this subsection, we use a Continuous-time Markov chain (CTMC) to determine the first time introduction of the disease in different groups of fish.

Ordinary differential equation models are the limit of a continuous-time Markov chain (CTMC) when the population and the number of realisations become large \cite{Kurtz1970}.
To capture stochastic fluctuations inherent in finite populations, we therefore formulate the CTMC associated with the deterministic system \eqref{sys:model-ODE-constant-stress}.
Let
\begin{equation}\label{eq:CTMC}
\mathbf{X}(t) = \big(S_N(t), S_S(t), I_N(t), I_S(t), R(t)\big),
\quad t \ge 0,
\end{equation}
denote the vector of population counts in each compartment at time $t$.
Each transition in the system corresponds to a discrete jump in $\mathbf{X}_t$ governed by an infinitesimal rate, or \emph{propensity}, which specifies the probability per unit time that a given event occurs.

The corresponding infinitesimal transition probabilities are defined as
\begin{equation}\label{eq:CTMC-transition}
\mathbb{P}\!\left(\mathbf{X}(t+\Delta t)=\mathbf{x}' \,\middle|\, \mathbf{X}(t)=\mathbf{x}\right)
= q(\mathbf{x},\mathbf{x}') \Delta t + o(\Delta t),
\end{equation}
where $q(\mathbf{x},\mathbf{x}')$ is the transition rate from state $\mathbf{x}$ to state $\mathbf{x}'$  given in the Table~\ref{tab:CTMC-stress}.

\begin{table}[h!]
\centering
\caption{Transitions and corresponding rates for the CTMC representation of the stress–infection model.}
\label{tab:CTMC-stress}
\begin{tabular}{lll}
\hline
\textbf{Transition} & \textbf{State change} & \textbf{Rate } \\
\hline
Birth & $S_N \to S_N + 1$ & $\Lambda$ \\[2pt]
Natural death of $x$ & $x\to x - 1$ & $\mu x$ \\
Stress transition & $S_N \to S_N - 1,\, S_S \to S_S + 1$ & $\alpha(t) S_N$ \\[2pt]
Infection of $S_N$ & $S_N \to S_N - 1,\, I_N \to I_N + 1$ & $(\beta_N I_N + \beta_S I_S) S_N$ \\[2pt]
Infection of $S_S$ & $S_S \to S_S - 1,\, I_S \to I_S + 1$ & $(\beta_N I_N + \beta_S I_S) S_S$ \\[2pt]
Recovery from $I_N$ & $I_N \to I_N - 1,\, R \to R + 1$ & $\gamma_N I_N$ \\[2pt]
Recovery from $I_S$ & $I_S \to I_S - 1,\, R \to R + 1$ & $\gamma_S I_S$ \\[2pt]
Disease-induced death of $I_N$ & $I_N \to I_N - 1$ & $d_N I_N$ \\[2pt]
Disease-induced death of $I_S$ & $I_S \to I_S - 1$ & $d_S I_S$ \\
\hline
\end{tabular}
\end{table}

\subsection{Branching process approximation}

During the early phase of an epidemic, when the number of infections is small, the CTMC can be approximated by a multitype branching process approximation (MBPA). 
In this phase, the susceptible population remains close to its disease-free equilibrium value $(S_N^0,S_S^0)$ and infections occur approximately independently.
Let
\[
\mathbf{I}(t) = (I_N(t), I_S(t)), \quad t \ge 0,
\]
denote the MBPA corresponding to infected classes.
Each infected individual of type $j$ produces new infections and terminates by recovery or death according to exponential rates derived from Table~\ref{tab:CTMC-stress}.
The reproduction kernel  of the branching process is given by
\[
\mathbf{M} =
\begin{pmatrix}
m_{11} & m_{12} \\
m_{21} & m_{22}
\end{pmatrix}
=
\begin{pmatrix}
\displaystyle \frac{\beta_N S_N^0}{\nu_N} & \displaystyle \frac{\beta_N S_S^0}{\nu_N} \\[8pt]
\displaystyle \frac{\beta_S S_N^0}{\nu_S} & \displaystyle \frac{\beta_S S_S^0}{\nu_S}
\end{pmatrix}.
\]
Entry $m_{jk}$ represents the expected number of offspring of type $k$ produced by one individual of type $j$ during its entire infectious period.
The spectral radius $\rho(\mathbf{M})$ is exactly the basic reproduction number $\mathcal{R}_0$.

The offspring probability generating functions (p.g.f.) corresponding to the offspring distribution per individual lifetime are
\begin{equation}\label{eq:pgf-stress} 
\left\{
\begin{aligned}
f_N(u_N, u_S) &= \frac{ \beta_N S_N^0 u_N^2 + \beta_N S_S^0 u_N u_S+\nu_N}{\nu_N + \beta_N(S_N^0+S_S^0)}, \\
f_S(u_N, u_S) &= \frac{\beta_S S_S^0 u_S^2 + \beta_S S_N^0 u_N u_S + \nu_S }{\nu_S + \beta_S(S_N^0+S_S^0)}.
\end{aligned}
\right.
\end{equation}

Let $\mathbf{u}^*=(u_N^*,u_S^*)$ be a fixed point of the map $\mathbf{f}(\mathbf{u})=(f_N(\mathbf{u}),f_S(\mathbf{u}))$ satisfying $\mathbf{f}(\mathbf{u}^*)=\mathbf{u}^*$. We have the following theorem.

\begin{theorem}[Probability of extinction]
\label{thm:extinction-stress}
The multitype branching process associated with the CTMC~\eqref{eq:CTMC-transition} is positive, regular and nonsingular. 
Let the initial number of infected individuals be $(I_N(0),I_S(0)) = (i_{N0},i_{S0})$.
Then, the probability of extinction of the infection is
\begin{equation}
\mathbb{P}_{\text{ext}} = (u_N^*)^{i_{N0}} (u_S^*)^{i_{S0}},
\end{equation}
where $\mathbf{u}^*$ is the smallest fixed point of $\mathbf{f}(\mathbf{u})$ in $[0,1]^2$.
Furthermore,
\begin{itemize}
\item if $\mathcal{R}_0 \le 1$, then $\mathbf{u}^* = (1,1)$ and $\mathbb{P}_{\text{ext}} = 1$;
\item if $\mathcal{R}_0 > 1$, then there exists a unique $\mathbf{u}^* \in (0,1)^2$ such that $\mathbf{f}(\mathbf{u}^*)=\mathbf{u}^*$, and hence $\mathbb{P}_{\text{ext}} < 1$.
\end{itemize}
\end{theorem}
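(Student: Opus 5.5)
The approach is to reduce the statement to the classical extinction theorem for Galton--Watson multitype branching processes (Harris; see also the Allen--van den Driessche framework). The relevant process is the one embedded in the jump chain of the MBPA. I would establish the structural hypotheses of that theorem, apply it, and then translate its threshold condition, stated via the mean matrix of $\mathbf f$, into the condition on $\mathcal R_0$. Throughout, the susceptibles are frozen at $(S_N^0,S_S^0)$ from \eqref{eq:DFE-constant-stress}, with $\alpha>0$ so that $S_N^0,S_S^0>0$.

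\textbf{Step 1 (structure).} I would define the expectation matrix $\widetilde{\mathbf M}=(\partial f_j/\partial u_k(1,1))_{j,k\in\{N,S\}}$ from \eqref{eq:pgf-stress}.
\begin{itemize}
\item \emph{Nonsingular:} each $f_j$ contains quadratic terms with strictly positive coefficients, so $\mathbf f$ is not a linear (affine) map in $\mathbf u$.
\item \emph{Positive regular:} every entry of $\widetilde{\mathbf M}$ is strictly positive. For instance $\partial f_N/\partial u_S(1,1)=\beta_N S_S^0/(\nu_N+\beta_N(S_N^0+S_S^0))>0$, and the diagonal entries are positive as well.
\end{itemize}
The general theorem then says three things. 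The extinction probabilities starting from a single type-$j$ individual form the minimal fixed point $\mathbf u^*$ of $\mathbf f$ in $[0,1]^2$. This fixed point equals $(1,1)$ iff $\rho(\widetilde{\mathbf M})\le1$. Otherwise it is the unique fixed point in $[0,1)^2$, and positivity of the constant terms $\nu_j$ places it in $(0,1)^2$. Since lineages of distinct initial infecteds evolve independently in the branching approximation, the extinction probability factorizes as $\mathbb P_{\text{ext}}=(u_N^*)^{i_{N0}}(u_S^*)^{i_{S0}}$.

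\textbf{Step 2 (link to $\mathcal R_0$).} Note that $\widetilde{\mathbf M}$ is not the kernel $\mathbf M$, since it counts offspring per event, not per lifetime. A direct computation gives
\[
\widetilde{\mathbf M}-I = D^{-1}\,(F^{T}-V),\qquad D=\operatorname{diag}\big(\nu_N+\beta_N(S_N^0+S_S^0),\ \nu_S+\beta_S(S_N^0+S_S^0)\big).
\]
Here $F$ and $V$ are the matrices of \eqref{eq:matrix-system} evaluated at the DFE. For example, the $(N,N)$ entry is $(2\beta_NS_N^0+\beta_NS_S^0)/D_N-1=(\beta_NS_N^0-\nu_N)/D_N$. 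Since $\widetilde{\mathbf M}$ is positive, Perron--Frobenius gives $\rho(\widetilde{\mathbf M})-1=s(\widetilde{\mathbf M}-I)$. I would then use the standard fact that for a Metzler matrix $A$ and a positive diagonal $D$, $s(D^{-1}A)$ and $s(A)$ have the same sign (zero included). This follows from the M-matrix characterization: $-A$ is a nonsingular M-matrix iff $-D^{-1}A$ is, and singularity is preserved under the scaling. Transposition does not change the spectrum. Combining these with \eqref{eq:spectral-relation}, together with its boundary version $\mathcal R_0=1\iff s(F-V)=0$, gives $\rho(\widetilde{\mathbf M})\le1\iff\mathcal R_0\le1$. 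This yields both bullet points.

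\textbf{Main obstacle.} I expect the hardest part to be Step 2, and specifically the equality case. The relation \eqref{eq:spectral-relation} as quoted covers only the strict inequality, so I would verify the boundary directly. In the $2\times2$ setting, $s(F-V)=0$ iff $\det(V-F)=0$ with $\operatorname{tr}(F-V)<0$. Moreover $\det(V-F)=\nu_N\nu_S(1-\mathcal R_0)$, because $F$ has rank one. The same determinant identity, scaled by $\det D^{-1}>0$, handles the scaled matrix. This makes the sign comparison fully explicit without invoking the general M-matrix theory.
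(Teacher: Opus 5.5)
Your proposal is correct, and its skeleton matches the paper's: read positive regularity and nonsingularity off the p.g.f.s \eqref{eq:pgf-stress}, then invoke the classical Harris extinction theorem for the embedded Galton--Watson process. You depart from the paper in how the threshold is linked to $\mathcal R_0$.

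\begin{itemize}
\item \emph{The paper's shortcut.} The paper computes $Df(\mathbf 1)$, gives it the same name $\mathbf M$ as the next-generation kernel, and simply asserts $\rho(\mathbf M)=\mathcal R_0$. For the per-event mean matrix this identity is false: in the single-type analogue, $f'(1)=2\mathcal R_0/(1+\mathcal R_0)$. Only the location of the threshold at $1$ is shared.
\item \emph{Your Step~2.} You use the identity $\widetilde{\mathbf M}-I=D^{-1}(F^{T}-V)$, the invariance of the sign of the spectral abscissa of a Metzler matrix under positive diagonal scaling, and the explicit formula $\det(V-F)=\nu_N\nu_S(1-\mathcal R_0)$ coming from $\operatorname{rank}F=1$. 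This is exactly the argument the paper is missing. It also covers the boundary case $\mathcal R_0=1$, which \eqref{eq:spectral-relation} leaves out and which the first bullet of the theorem needs.
\item \emph{Nonsingularity.} Your check (the $f_j$ are not affine) uses the actual definition. The paper instead appeals to monotonicity of $\mathbf f$ via Berman--Plemmons, which does not address nonsingularity in the branching-process sense.
\end{itemize}

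The paper's version buys only brevity. Yours costs one short lemma but makes both bullet points genuinely follow from Harris's theorem.
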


\begin{proof}
We prove that the associated multitype branching process is positive, regular and nonsingular
and that its extinction properties are determined by the spectral radius of the first-moment matrix.

\begin{itemize}
    \item[(i)] \textbf{Nonnegativity and monotonicity.}
    For all $j,k \in  \{N,S\}$, compute the partial derivative of~\eqref{eq:pgf-stress};
    for all $\mathbf{u}\in [0,1]^2$,
    \[
    \frac{\partial f_j}{\partial u_j}(\mathbf{u})
    = \frac{\beta_j (2S_j^0u_j+S_k^0u_k)}{\beta_j (S_N^0 + S_S^0)+ \nu_j}
    \text{ and }   \frac{\partial f_j}{\partial u_k}(\mathbf{u})
    = \frac{\beta_j S_k^0 u_j}{\beta_j (S_N^0 + S_S^0)+ \nu_i}.
    \]
    Hence the Jacobian matrix $Df(\mathbf{u})$ is entrywise nonnegative and the map 
    $\mathbf{f}$ is monotone on $[0,1]^2$.
    By \cite[Theorem~2.3, p.~113]{berman1979nonnegative}, the corresponding multitype branching process is \emph{nonsingular}.

    \item[(ii)] \textbf{First-moment matrix and primitivity.}
    The mean (first-moment) matrix is given by
    \[
    \mathbf{M} = Df(\mathbf{1}), \quad \mathbf{1} = (1,1)^{\top}.
    \]
    Since the derivatives of $f_j$ are constant in $\mathbf{u}$,
    we obtain
    \[M_{jj}
    = \frac{\beta_j(2 S_j^0+S_k^0)}{\beta_j (S_N^0 + S_S^0)+ \nu_j} \text{ and }
    M_{jk}
    = \frac{\beta_j S_k^0}{\beta_j (S_N^0 + S_S^0)+ \nu_j},
    \qquad j,k \in  \{N,S\}.
    \]
    All entries of $\mathbf{M}$ are strictly positive if $\beta_k S_k^0 > 0$,
    hence $\mathbf{M}$ is a positive matrix and thus \emph{primitive} matrix 
    
\end{itemize}

From (i) and (ii), the associated multitype branching process is 
\emph{positive, regular, and nonsingular}.
Let $\rho(\mathbf{M})$ denote the spectral radius of $\mathbf{M}$.
By standard branching-process results
\cite{ALLEN201399,Harris1963}, the following holds:
\begin{itemize}
    \item if $\rho(\mathbf{M}) = \mathcal{R}_0 \le 1$, 
    then the only fixed point of $\mathbf{f}$ in $[0,1]^2$ is $\mathbf{1}$,
    implying certain extinction ($\mathbb{P}_{\mathrm{ext}}=1$);
    \item if $\rho(\mathbf{M}) = \mathcal{R}_0 > 1$, 
    then there exists a unique nontrivial fixed point 
    $\mathbf{u}^* \in (0,1)^2$ satisfying $\mathbf{f}(\mathbf{u}^*) = \mathbf{u}^*$,
    leading to a probability of extinction:
    \[
    \mathbb{P}_{\mathrm{ext}} 
    = (u_N^*)^{i_{N0}} (u_S^*)^{i_{S0}}.
    \]
\end{itemize}
This completes the proof.
\end{proof}
This stochastic framework quantifies extinction probabilities and the likelihood of invasion in finite populations.
It complements the deterministic analysis by providing a probabilistic interpretation of the threshold $\mathcal{R}_0=1$.

To illustrate how stress and the seeding class affect the probability of
extinction, we compute
$\mathbb{P}_{\text{ext}}$
for a range of initial conditions and for the four water–stress
scenarios (wet, medium, seasonal, dry), corresponding to increasing
values of $\alpha$.  
In Figure~\ref{fig:Pext-violin-alpha-seed}, the violins show that
extinction is very likely in the wet, low–stress regime
($\alpha = 0.001$), with $\mathbb{P}_{\text{ext}}$ concentrated near~1,
and becomes progressively less likely as $\alpha$ increases:
for the dry scenario ($\alpha = 0.05$), most mass is near~0, indicating
a high chance of a major outbreak.  
For a given $\alpha$, extinction is slightly less likely when the
epidemic starts in the stressed class $I_S$ than when it starts in the
non–stressed class $I_N$, reflecting the higher transmission potential
of stressed fish.

\begin{figure}[H]
    \centering
    \includegraphics[width=\textwidth]{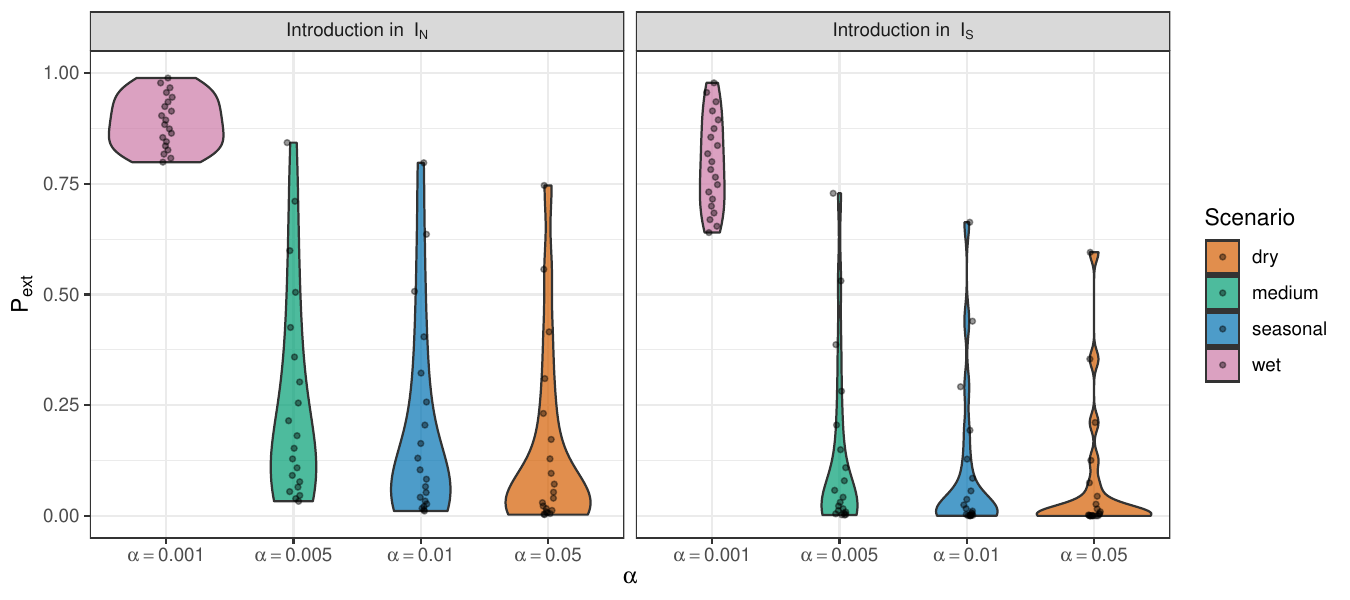}
    \caption{
    Distribution of extinction probability $\mathbb{P}_{\text{ext}}$
    over initial conditions, for four stress levels $\alpha$
    (wet, medium, seasonal, dry) and two seeding classes.
    Left: the epidemic is seeded in $I_N$; right: it is seeded in $I_S$.
    }
    \label{fig:Pext-violin-alpha-seed}
\end{figure}

\subsection{First time introduction of the disease}

In order to quantify how stress modifies the timing of infection, we use the CTMC formulation to compute the \emph{first infection times} in each infected class.
For a given parameter set, we simulate a large number of independent sample paths of the CTMC starting from a disease-free population at its deterministic DFE $(S_N^0,S_S^0,0,0,0)$, and introduce a single infected individual in one of the susceptible classes.
For each trajectory, we record the random times
\begin{equation}\label{eq:tau}
\tau_N = \inf\{t>0 : I_N(t)>0\},
\qquad
\tau_S = \inf\{t>0 : I_S(t)>0\},
\end{equation}
which correspond, respectively, to the first appearance of infection among non-stressed fish ($I_N$) and stressed fish ($I_S$).

Figure~\ref{fig:first-intro-ctmc} shows the distribution of first cross-introduction times obtained from $50,000$ CTMC simulations started from $S_N(0)=10,000$, $S_S(0)=0$ and a single infected fish.
When the epidemic is seeded in the non-stressed class $I_N$ (top panel), the first appearance in $I_S$ is typically delayed by several days: introductions occur earlier under dry (high-stress) conditions and later under wet (low-stress) conditions, with the medium and seasonal scenarios lying in between and exhibiting a longer tail of late introductions.
In contrast, when the epidemic is seeded in the stressed class $I_S$ (bottom panel), the first appearance in $I_N$ occurs almost immediately in all scenarios, with densities sharply concentrated near $0$.

This asymmetry reveals a critical vulnerability.
When infection enters via a stressed individual ($I_S$), the high transmission rate $\beta_S$ allows for an immediate explosive invasion into the abundant $S_N$ class (Figure~\ref{fig:first-intro-ctmc}, bottom).
Conversely, an introduction via $I_N$ faces a stochastic barrier: the infection must survive long enough in the lower-transmission $I_N$ class to either amplify or transmit to a stressed individual.
This delay in the top panel represents a transient window of opportunity for control that is lost if the index case is a stressed fish.

\begin{figure}[H]
  \centering
  \includegraphics[width=\linewidth]{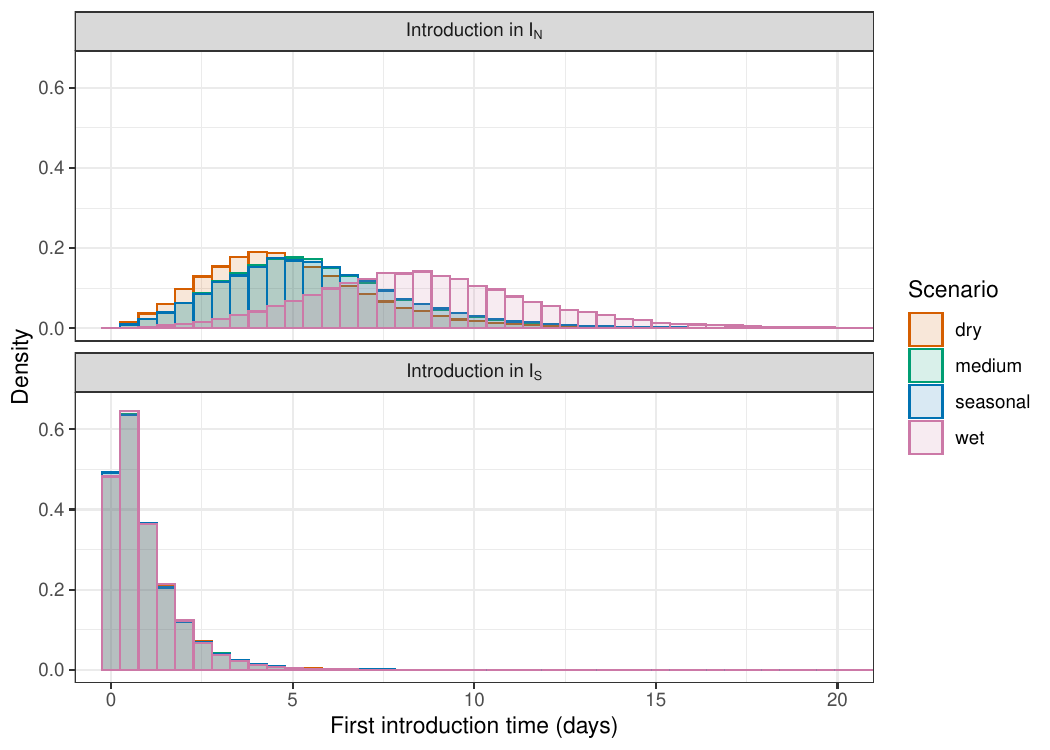}
  \caption{\textbf{First cross-introduction times by seeding class and stress scenario.}
 Top: time to the first appearance ($\tau_S$) in $I_S$ when seeding one infected in $I_N$ at $t=0$.
  Bottom: time to the first appearance ($\tau_N$) in $I_N$ when seeding one infected in $I_S$.
  Colours indicate the four different stress scenarios.}
  \label{fig:first-intro-ctmc}
\end{figure}

\section{Discussion}\label{sec:discussion}

The interactions between environmental stressors and pathogen transmission are complex and often non-linear. While empirical studies have long established that stress compromises fish immunity \citep{snieszko1974stress,tort2011stress}, quantifying how this individual-level physiology scales up to population-level epidemics has remained a challenge. In this study, we developed a model to bridge this gap, treating susceptibility not as a fixed trait but as a dynamic state driven by water quality.

For the deterministic model, we demonstrate that the basic reproduction number $\mathcal{R}_0$ remains a valid threshold for long-term eradication and we quantify how elevated stress levels amplify the outbreak's peak. 
In contrast, stochastic simulations reveal a critical vulnerability: the probability of an outbreak depends strongly on \emph{which} physiological class first introduces the pathogen. 
This finding highlights that in high-stress environments, the effective window for intervention may close far more rapidly than a deterministic model.

Our deterministic analysis reveals that despite the added complexity of host heterogeneity, the system retains a classical forward bifurcation at $\mathcal{R}_0=1$. This is a significant structural property. In many heterogeneous models particularly those with assortative mixing or multi-group structures, backward bifurcations often arise, creating bistable regimes where disease control becomes  difficult \citep{gumel2012causes,hadeler1997backward}. The absence of such bistability in our model suggests that standard threshold-based control strategies remain valid: if the basic reproduction number is less than one, theoretical eradication is possible.

However, our stochastic analysis highlights a critical limitation of relying solely on deterministic thresholds. We identified a ``stochastic barrier'' effect that depends heavily on the physiological state of the index case. When infection is introduced via a normal host ($I_N$), the lower transmission potential ($\beta_N$) and the possibility of recovery before stress induction create a high probability of stochastic extinction. This aligns with the concept of ``evolutionary suicide'' or ``stochastic fade-out'' observed in low-density populations \citep{lloyd2005superspreading}.
In contrast, introduction via a stressed host ($I_S$) effectively bypasses this barrier. Because stressed fish act as ``super-shedders'' with higher transmission rates ($\beta_S$), the infection chain creates a rapid creation of secondary cases, pushing the system quickly out of the stochastic drift phase and into exponential growth. This temporal asymmetry implies that in aquaculture settings, the \emph{timing} of biosecurity breaches relative to water quality events is paramount. A pathogen introduction during a hypoxic event is qualitatively different from one during normoxia, even if the long-term average $\mathcal{R}_0$ is identical.

From a management perspective, our results suggest that monitoring dissolved oxygen (DO) is as critical as monitoring the pathogen itself. Current protocols often focus on quarantining symptomatic fish \citep{murray2002precautionary}. However, our model indicates that the stressed susceptible class ($S_S$) is a silent reservoir of risk.
Interventions that improve water quality do more than just improve fish welfare; they actively restore the stochastic barrier against invasion.
Furthermore, the delay mechanism observed in the introduction in $I_N$ scenarios provides a theoretical window of opportunity. If managers can detect the pathogen while it is still confined to the healthy sub-population, aggressive culling or treatment may succeed. Once the infection bleeds into the stressed class, the window closes.

We assumed a homogeneous mixing assumption within the population, which is standard for compartmental models. However, spatial gradients of oxygen often exist in large water \citep{solstorm2016dissolved}, potentially creating localized hotspots of stressed fish where infection could foster. Future work could extend this framework to a Partial Differential Equation (PDE) model to account for spatial oxygen diffusion. Additionally, while we modeled stress as a one-way transition driven by water quality, fish can recover from stress if conditions improve using control methods. Incorporating a reversible recovery rate $S_S \to S_N$ would allow for the study of resilience strategies in recirculating aquaculture systems (RAS).

\section*{Acknowledgements}
JA acknowledges partial support from NSERC. CD thanks Stéphanie Portet for the supportive environment at the University of Manitoba and acknowledges partial support from a 2024/2025 Maud Menten Institute Research Accelerator Award.

\appendix

\section{End of proof of Proposition~\ref{prop:behaviour-constant-stress}}\label{app:proof-LAS-EE}

\begin{proof}[End of proof of Proposition~\ref{prop:behaviour-constant-stress}]
To prove that the endemic equilibrium is locally asymptotically stable, we apply \cite[Theorem 4.1]{castillo2004dynamical}.
For this, we rewrite \eqref{sys:model-ODE-constant-stress} in the form
\begin{equation}\label{Castillo-flower}
  \frac{dx}{dt} = f(x,\psi),
\end{equation}
with $x\in\mathbb R^5$ and a scalar bifurcation parameter
$\psi\in\mathbb R$, such that $x=0$ is an equilibrium of
\eqref{Castillo-flower} for all $\psi$.

\medskip\noindent
\textbf{Step 1: Shift of the DFE and choice of the parameter.}
Let
\[
x_1 = S_N - S_N^0,\quad
x_2 = S_S - S_S^0,\quad
x_3 = I_N,\quad
x_4 = I_S,\quad
x_5 = R,
\]
and define $x=(x_1,\dots,x_5)^\top$. Then $x=0$ corresponds to the
disease-free equilibrium $E^0$. We take as bifurcation parameter the
transmission coefficient $\beta$ and write the system in the form
$\dot x = f(x,\beta)$, with $f(0,\beta)\equiv 0$ for all $\beta$.

As observed above, the basic reproduction number satisfies
$\mathcal R_0(\beta) = \beta\kappa$. Let $\beta^* = 1/\kappa$ be the
unique value such that $\mathcal R_0(\beta^*)=1$, and introduce the
shifted parameter
\[
\psi = \beta - \beta^*.
\]
Then $\psi=0$ if and only if $\mathcal R_0=1$, and $\psi<0$ (resp.\
$\psi>0$) corresponds to $\mathcal R_0<1$ (resp.\ $\mathcal R_0>1$).
Thus system~\eqref{sys:model-ODE-constant-stress} can be written as
\eqref{Castillo-flower} with $f(0,\psi)\equiv 0$.

\medskip\noindent
\textbf{Step 2: Verification of assumption $A_1$.}
Let $J(E^0)=D_x f(0,0)$ denote the Jacobian matrix of
\eqref{Castillo-flower} at  DFE $E^0$
with $\beta=\beta^*$.  A direct
computation from~\eqref{sys:model-ODE-constant-stress} gives
\[
J(E^0) =
\begin{pmatrix}
-(\alpha+\mu) & 0 & -\beta^* S_N^0 & -\sigma\beta^* S_N^0 & 0\\[2pt]
 \alpha       & -\mu & -\beta^* S_S^0 & -\sigma\beta^* S_S^0 & 0\\[2pt]
 0            & 0 & \beta^* S_N^0 - \nu_N & \sigma\beta^* S_N^0 & 0\\[2pt]
 0            & 0 & \beta^* S_S^0 & \sigma\beta^* S_S^0 - \nu_S & 0\\[2pt]
 0            & 0 & \gamma_N & \gamma_S & -\mu
\end{pmatrix}.\]
The eigenvalues of $J(E^0)$ are 
$- \mathcal R_0(\beta^*)$, $0$, $- \alpha - \mu$ and  $- \mu$ with multiplicity $2$.
Therefore, at $\psi=0$ the Jacobian
$J(E^0)$ has exactly one eigenvalue equal to $0$, while all other
eigenvalues are real and strictly negative. This verifies assumption
$A_1$ of the Theorem~4.1 in \cite{castillo2004dynamical}.

\medskip\noindent
\textbf{Step 3: Verification of assumption $A_2$ and computation of
$a$ and $b$.}
We need to compute the left and
    right eigenvectors of $J(E^0)$ associated with the eigenvalue $0$. 

Let $v=(v_1,v_2,v_3,v_4,v_5)$ be a left eigenvector associated with the
eigenvalue $0$, that is
\[
v\,J(E^0) = (0,0,0,0,0).
\]
Solving the above equation, 
we obtain

\[
v =  \left( 0,\;0,\;\frac{\nu_S}{\nu_N\sigma}v_4,\;v_4,\;0 \right),\; v_4>0.
\]

Let $u=(u_N,u_S,u_3,u_4,u_5)^\top$ be a right eigenvector corresponding
to the eigenvalue $0$, so that
\[
J(E^0)\,u = (0,0,0,0,0)^\top.
\]
Solving the above system  yields
\[
u 
=
\begin{pmatrix}- \frac{S_N^0 \mu \nu_{1} \nu_{2}}{S_N^0 \alpha \gamma_{1} \nu_{2} + S_N^0 \gamma_{1} \mu \nu_{2} + S_S^0 \alpha \gamma_{2} \nu_{1} + S_S^0 \gamma_{2} \mu \nu_{1}}u_5\\- \frac{S_N^0 \alpha \nu_{1} \nu_{2} + S_S^0 \alpha \nu_{1} \nu_{2} + S_S^0 \mu \nu_{1} \nu_{2}}{S_N^0 \alpha \gamma_{1} \nu_{2} + S_N^0 \gamma_{1} \mu \nu_{2} + S_S^0 \alpha \gamma_{2} \nu_{1} + S_S^0 \gamma_{2} \mu \nu_{1}}u_5\\\frac{S_N^0 \mu \nu_{2}}{S_N^0 \gamma_{1} \nu_{2} + S_S^0 \gamma_{2} \nu_{1}}u_5\\\frac{S_S^0 \mu \nu_{1}}{S_N^0 \gamma_{1} \nu_{2} + S_S^0 \gamma_{2} \nu_{1}}\\u_5\end{pmatrix}, \; u_5>0.
\]

We now compute the coefficients $a$ and $b$ defined in
assumption $A_2$ of Theorem~4.1 in \citep{castillo2004dynamical}. For the rest of the analysis, we consider $v_4=1$ and $u_5=1$. Let $f_k$ denote
the $k$th component of the vector field $f$, corresponding to the
equations for $(x_1,x_2,x_3,x_4,x_5)$. Then
\[
a = \sum_{k,i,j=1}^{5} v_k u_i u_j
      \frac{\partial^2 f_k}{\partial x_i \partial x_j}(0,0),
\qquad
b = \sum_{k,i=1}^{5} v_k u_i
      \frac{\partial^2 f_k}{\partial x_i \partial \beta}(0,0).
\]

\emph{Computation of $b$.} Only those terms of $f$ that depend
explicitly on the parameter $\beta$ contribute to $b$. These are the
infection terms involving $\lambda = \beta(x_3+\sigma x_4)$ in the
equations for $x_1,x_2,x_3,x_4$. A computation at
$E^0$ shows that the only non-null second derivatives
are
\[
\begin{aligned}
&\frac{\partial^2 f_N}{\partial x_3\partial\beta}(0,0) = -S_N^0, \quad
 &&\frac{\partial^2 f_N}{\partial x_4\partial\beta}(0,0) = -\sigma S_N^0,\\
&\frac{\partial^2 f_S}{\partial x_3\partial\beta}(0,0) = -S_S^0, \quad
 &&\frac{\partial^2 f_S}{\partial x_4\partial\beta}(0,0) = -\sigma S_S^0,\\
&\frac{\partial^2 f_3}{\partial x_3\partial\beta}(0,0) =  S_N^0, \quad
 &&\frac{\partial^2 f_3}{\partial x_4\partial\beta}(0,0) =  \sigma S_N^0,\\
&\frac{\partial^2 f_4}{\partial x_3\partial\beta}(0,0) =  S_S^0, \quad
 &&\frac{\partial^2 f_4}{\partial x_4\partial\beta}(0,0) =  \sigma S_S^0,
\end{aligned}
\]
 Substituting these into the expression for $b$ and using the
fact that $u_3,u_4,v_3,v_4>0$, and using the fact that 

\[b=v_3\sum_{i=1}^{5}u_i\frac{\partial f_3}{\partial x_i \partial \beta}+ v_4\sum_{i=1}^{5}u_i\frac{\partial f_4}{\partial x_i \partial \beta}=(\frac{ \nu_S}{\nu_N}\sigma S_N^0+S_S^0)(u_3+\sigma u_4) \]
we obtain

\begin{equation}\label{exp:b}
    b= \frac{\mu \left(S_N^0 \nu_{2} + S_S^0 \nu_{1} \sigma\right)^{2}}{\nu_{1} \sigma \left(S_N^0 \gamma_{1} \nu_{2} + S_S^0 \gamma_{2} \nu_{1}\right)} >0.
\end{equation}

\emph{Computation of $a$.} The coefficient $a$ depends on the
second derivatives of $f$ with respect to the state variables $x_i$ and
$x_j$. In system~\eqref{sys:model-ODE-constant-stress} the only nonlinearities are the
bilinear incidence terms $\lambda x_1$ and $\lambda x_2$, so only the
equations for $x_1,x_2,x_3,x_4$ contribute to $a$. 
\[
\begin{aligned}
&\frac{\partial^2 f_N}{\partial x_1 \partial x_3}(0,0) = \frac{\partial^2 f_N}{\partial x_3 \partial x_1}(0,0) = \frac{\partial^2 f_S}{\partial x_2 \partial x_3}(0,0) =\frac{\partial^2 f_S}{\partial x_3 \partial x_2}(0,0) = -\beta^*,\\[4pt]
&\frac{\partial^2 f_N}{\partial x_1 \partial x_4}(0,0) = \frac{\partial^2 f_N}{\partial x_4 \partial x_1}(0,0) =\frac{\partial^2 f_S}{\partial x_2 \partial x_4}(0,0) =\frac{\partial^2 f_S}{\partial x_4 \partial x_2}(0,0) = -\sigma\beta^*,\\[6pt]
&\frac{\partial^2 f_3}{\partial x_1 \partial x_3}(0,0) = \frac{\partial^2 f_3}{\partial x_3 \partial x_1}(0,0) = \frac{\partial^2 f_4}{\partial x_2 \partial x_3}(0,0) = \frac{\partial^2 f_4}{\partial x_3 \partial x_2}(0,0) = \beta^*,\\[4pt]
&\frac{\partial^2 f_3}{\partial x_1 \partial x_4}(0,0) = \frac{\partial^2 f_3}{\partial x_4 \partial x_1}(0,0) = \frac{\partial^2 f_4}{\partial x_2 \partial x_4}(0,0) = \frac{\partial^2 f_4}{\partial x_4 \partial x_2}(0,0) = \sigma\beta^*,
\end{aligned}
\]

Moreover, the terms will be only for non-null vectors $v_3$ and $v_4$ which correspond to

\[
\begin{aligned}
a
&= \sum_{i,j=1}^{5} v_3 u_i u_j
      \frac{\partial^2 f_3}{\partial x_i \partial x_j}(0,0)
  + \sum_{i,j=1}^{5} v_4 u_i u_j
      \frac{\partial^2 f_4}{\partial x_i \partial x_j}(0,0)\\[4pt]
&= v_3\Bigl(
    u_N u_3\,\tfrac{\partial^2 f_3}{\partial x_1 \partial x_3}
  + u_3 u_N\,\tfrac{\partial^2 f_3}{\partial x_3 \partial x_1}
  + u_N u_4\,\tfrac{\partial^2 f_3}{\partial x_1 \partial x_4}
  + u_4 u_N\,\tfrac{\partial^2 f_3}{\partial x_4 \partial x_1}
  \Bigr)\\
&\quad
 + v_4\Bigl(
    u_S u_3\,\tfrac{\partial^2 f_4}{\partial x_2 \partial x_3}
  + u_3 u_S\,\tfrac{\partial^2 f_4}{\partial x_3 \partial x_2}
  + u_S u_4\,\tfrac{\partial^2 f_4}{\partial x_2 \partial x_4}
  + u_4 u_S\,\tfrac{\partial^2 f_4}{\partial x_4 \partial x_2}
  \Bigr)\\[4pt]
&= v_3\Bigl(
    2\beta^* u_N u_3 + 2\sigma\beta^* u_N u_4
  \Bigr)
 + v_4\Bigl(
    2\beta^* u_S u_3 + 2\sigma\beta^* u_S u_4
  \Bigr)\\[4pt]
&= 2\beta^*\Bigl[
    v_3\bigl(u_N u_3 + \sigma u_N u_4\bigr)
  + v_4\bigl(u_S u_3 + \sigma u_S u_4\bigr)
  \Bigr].
\end{aligned}
\]

Replacing the expression of $\beta^*$ in \eqref{eq:beta_star} and the expression of each $u_i$, we obtain 

\begin{equation}\label{exp:a}
a = - \frac{2 \mu \nu_{1} \nu_{2}^{2} \left(S_N^0 \alpha \nu_{1} \sigma + S_N^0 \mu \nu_{2} + S_S^0 \alpha \nu_{1} \sigma + S_S^0 \mu \nu_{1} \sigma\right)}{\sigma \left(\alpha + \mu\right) \left(S_N^0 \gamma_{1} \nu_{2} + S_S^0 \gamma_{2} \nu_{1}\right)^{2}} <0.
\end{equation}

\medskip\noindent
\textbf{Step 4:}
We have verified that assumptions $A_1$ and $A_2$ and 
since $b>0$ and $a<0$, we are in  case~4 of
\cite[Theorem~4.1]{castillo2004dynamical}, then the system
undergoes a \emph{forward} transcritical bifurcation at $\psi=0$:
as $\psi$ (equivalently $\beta=\beta^*$, or $\mathcal R_0=1$) passes from negative
to positive values, the equilibrium at the DFE changes
stability from locally asymptotically stable to unstable, and a unique
positive endemic equilibrium  branch  emerges and is locally asymptotically
stable. Translating back to the original variables, this yields a
unique endemic equilibrium $E^*$ for $\mathcal R_0>1$, which is locally
asymptotically stable, while $E^0$ is the only equilibrium and is
locally asymptotically stable for $\mathcal R_0<1$.

This completes the proof of Proposition~\ref{prop:behaviour-constant-stress}.
\end{proof}

\bibliographystyle{plain}
\bibliography{references}

\end{document}